\documentclass[11pt]{article}
\usepackage[a4paper,top=1.25in,bottom=1.25in,left=1.25in,right=1.25in]{geometry}
\usepackage{amsmath,amsthm,setspace,pgfplots,tikz}
\usepackage[T1]{fontenc}
\usepackage[utf8]{inputenc}
\usepackage[affil-it]{authblk}
\usepackage[backend=biber,giveninits=true, style=authoryear]{biblatex}
\usepackage{float}
\usepackage{booktabs}
\usepackage{xurl}
\usetikzlibrary{decorations.pathreplacing}

\usepgfplotslibrary{fillbetween}
\pgfplotsset{compat=1.18}

\pgfmathdeclarefunction{Xsym}{2}{%
  \pgfmathparse{(2/(1-#2))*(1+#1)^2}%
}

\pgfmathdeclarefunction{hatf}{3}{%
  \pgfmathparse{(1+#3)*(1+#2*#1)-1}%
  }

\pgfmathdeclarefunction{barf}{4}{%
  \pgfmathparse{(Xsym(#1,#3)/(1+#2*#1))*(1-1/(2*(1+#4)))-1}%
}

\pgfmathdeclarefunction{psip}{3}{%
  \pgfmathparse{Xsym(#1,#3)/(2*(1+#2*#1)^2)-1}%
}

\pgfmathdeclarefunction{tssf}{4}{%
  \pgfmathparse{max(hatf(#1,#2,#4),barf(#1,#2,#3,#4))}%
}

\pgfmathdeclarefunction{psix}{3}{%
  \pgfmathparse{%
    (Xsym(#1,#3)
    + sqrt(max(0,(Xsym(#1,#3))^2
    - 2*Xsym(#1,#3)*(1+#2*#1)^2)))
    /(2*(1+#2*#1)^2)-1%
  }%
}

\pgfmathdeclarefunction{psirunhat}{3}{%
  \pgfmathparse{(1+#3)/(1+#2*#1)-1}%
}

\pgfmathdeclarefunction{psirunbar}{4}{%
  \pgfmathparse{1/(2*(1-(1+#4)/(Xsym(#1,#3)/(1+#2*#1))))-1}%
}

\pgfmathdeclarefunction{psirunf}{4}{%
  \pgfmathparse{min(psirunhat(#1,#2,#4),psirunbar(#1,#2,#3,#4))}%
}

\pgfplotsset{
  thresholdaxis/.style={
    width=0.92\textwidth,
    height=0.56\textwidth,
    grid=both,
    grid style={line width=.1pt, draw=gray!25},
    major grid style={line width=.2pt, draw=gray!35},
    tick label style={font=\small},
    label style={font=\small},
    title style={font=\small},
    legend style={font=\small, draw=none, fill=none},
    scaled ticks=false,
  },
}

\theoremstyle{plain}
\newtheorem{proposition}{Proposition}
\newtheorem{theorem}{Theorem}
\newtheorem{lemma}{Lemma}

\theoremstyle{definition}
\newtheorem{remark}{Remark}

\newenvironment{keywords}
{\begin{trivlist}\item[]{\bfseries Keywords:} }
{\end{trivlist}}

\newenvironment{jel}
{\begin{trivlist}\item[]{\bfseries JEL Codes:}\ }
{\end{trivlist}}

\begin{document}

\pagestyle{plain}
\setstretch {1.25}

\title{Surrender runs}

\author[a]{Andreas L\"offler}
\author[a]{Stefan Steins\thanks{Corresponding author. Email: stefan.steins@fu-berlin.de\\
This paper was written while the second author was a visiting researcher at the Deutsche Bundesbank. The authors thank Till Förstemann for helpful comments and discussion.}}

\affil[a]{\small Freie Universit\"at Berlin, 14195 Berlin, Germany}
\date{August 21, 2026}

\maketitle

\begin{abstract}
\noindent Rising interest rates can expose life insurers to surrender risk by reducing the market value of their assets and raising policyholders' outside returns. This paper develops a minimal model of an insurance-specific run mechanism in which strategic interaction arises because early surrenders can deplete the asset pool backing continuation values. We analyze a strategic surrender game in which a run is an equilibrium outcome determined jointly by asset values, contractual surrender claims, and payoff-dependent continuation benefits. The model yields closed-form interest-rate thresholds for fundamentals-driven and self-fulfilling runs. Thinner capitalization weakly lowers the joint-surrender cutoff. Continuation benefits raise surrender thresholds but can also create strategic complementarity.
\end{abstract}

\begin{jel}
G22, G01, E43, C72, G32
\end{jel}

\begin{keywords}
life insurance, interest-rate risk, surrender risk, mass lapse, surrender runs, continuation benefits
\end{keywords}

\newpage



\section{Introduction}\label{sec:introduction}

Rising market interest rates increase life-insurance surrender incentives. Most life-insurance contracts contain an embedded surrender option. As market rates rise, the market value of insurers' fixed-income assets falls, while contractual surrender claims are fixed or adjust only slowly. At the same time, the outside return available to policyholders rises, making surrender more attractive and increasing the risk of mass surrenders. This has potential financial stability implications because life insurers hold around 20\% of global bonds and 30\% of credit investments \parencite[p.~21]{IMFOc.2021}.

\textcite{Feodoria.2015} identify the interest-rate level at which an insurer's marked-to-market assets cease to cover total contractual surrender claims. Beyond this rate, a marginal surrender with full payout of the contractual claim depletes the assets backing the remaining contracts, making insurers vulnerable to surrender waves. However, this underfunding threshold need not coincide with the rate at which surrender becomes optimal for policyholders. A surrender run is an equilibrium outcome determined jointly by asset values, contractual surrender claims, and surrender frictions. Surrender thresholds therefore generally exceed the rate at which aggregate surrender claims become underfunded.

The ECB's 2022--2023 monetary tightening cycle and associated bond market repricing make the distinction salient. In 2022, the Deutsche Bundesbank warned of the risk of a wave of policy lapses should yields on ten-year Bunds exceed \(3\%\), identifying this rate as the aggregate underfunding threshold for the German market \parencite[pp.~62--63]{bundesbank2022fsr}. On 14 September 2023, the ECB raised its key policy rates by 25 basis points \parencite{ecb2023sep_mpd}. The official reference yield on the ten-year Federal bond subsequently reached \(2.97\%\) in October, while yields on longer-maturity Federal securities briefly exceeded \(3\%\) \parencite[Table~I.3]{bundesbank2023cmi_nov}. Nevertheless, no broad-based lapse wave materialized in Germany; during the 2024--2025 reporting period, lapse rates fell below \(5\%\) \parencite[p.~19]{afs2025twelfth}.

European evidence points in the same direction, with Italy as the main exception among large insurance markets. EIOPA reports that the median lapse rate in the EEA was stable at around \(3.0\%\) in 2022 and \(3.1\%\) in 2023 \parencite{EIOPAFSR2023,EIOPAFSR2024}. By contrast, IVASS reports that 2023 surrenders in Italy increased by EUR~32 billion, or \(59\%\) relative to 2022, with a larger increase for contracts distributed mainly through banks \parencite{IVASSQuaderno31}. These contrasting outcomes suggest that balance-sheet vulnerability alone is insufficient to explain mass-surrender risk.

Subsequent German assessments explain why surrender thresholds may exceed the corresponding underfunding thresholds. In its 2023 Financial Stability Review, the Bundesbank revised its adjusted median critical rate to \(4.3\%\) (\cite[pp.~44--48]{bundesbank2023fsr}; see also \cite[p.~20]{afs2024eleventh}) and noted: ``Since a policy lapse entails transaction and information costs, and because supplementary insurance policies covering occupational disability and the like are lost, the critical interest rate is likely to be higher from the perspective of many policyholders'' \parencite[pp.~45--46]{bundesbank2023fsr}. The Bundesbank's 2025 report cites survey evidence that the risk would become substantial only around a \(6\%\) yield on the ten-year Bund \parencite[p.~85]{bundesbank2025fsr}.

This evidence motivates our contribution. We develop a parsimonious structural model in which surrender can reduce the assets backing continuation payoffs. Policyholder payoffs are determined by contractual claims, the market value of an insurer's assets, pro-rata recovery, and continuation benefits. Since benefits like insurance cover against life risks scale with the realized continuation payoff, they simultaneously raise continuation value and amplify the effect of asset dilution, thereby generating strategic complementarity. The model yields closed-form interest-rate thresholds that separate underfunding from equilibrium surrender incentives and classify run regions as fundamentals-driven or self-fulfilling.

The model primitives can be linked to observable market, contractual, and regulatory quantities. The equity parameter \(E\) is interpreted as a reduced-form own-funds buffer, the profit-participation parameter \(\lambda\) maps asset returns into credited contractual growth, and the interim market rate \(y_1\) corresponds to the one-period yield equivalent of a market-value asset shock. The continuation-benefit parameter \(\psi\) represents the payoff-dependent value of retained contract features.

We obtain two main existence results. Theorem~\ref{theorem1} characterizes when joint continuation and joint surrender coexist as Nash equilibria. Theorem~\ref{theorem2} gives a weaker existence condition under which joint surrender is the unique Nash equilibrium. The model yields transparent comparative statics. Proposition~\ref{prop:comp_stat_surrender_cutoff} shows that the strict joint-surrender cutoff is weakly increasing in the own-funds buffer and strictly increasing in the continuation-benefit parameter. To illustrate the economic magnitude of these effects, we present a parametrization. The numerical examples show that a weakly capitalized insurer is more exposed to mass-surrender risk, while strong continuation benefits disincentivize surrender even when asset values are significantly impaired. Compared to fundamentals-driven runs, self-fulfilling runs occur only at relatively high continuation-benefit levels and require large effective rate shocks.

The paper connects to three strands of literature. It adds directly to recent work on life insurer liquidity, critical rate thresholds, and aggregate lapse risk. \textcite{Feodoria.2015} define the rate at which a positive rate shock can make life insurers financially vulnerable if a wave of lapses occurs. \textcite{Forstemann.2021} develops a related frictionless surrender model around the critical-rate framework. \textcite{Chang.Schmeiser.2022} study surrender and liquidity risk jointly and show that liquidity constraints can increase surrender risk. \textcite{Kubitza.2025} document empirically that higher interest rates raise surrender rates and use a calibrated model to study the resulting liquidity and portfolio-rebalancing effects. \textcite{Koijen.2024} model heterogeneous exposure to common lapse factors related to credit spreads, unemployment, and interest rates. 

Second, the paper relates to the literature on surrender options and participating life-insurance contracts, which typically studies surrender, bonus, and premium features as valuation or optimal-stopping problems \parencite{Albizzati.Geman.1994,Grosen.2000,Bacinello.a.2003,Gatzert.Schmeiser.2008,Schmeiser.2011}. A broader literature on policyholder behavior incorporates taxes, transaction costs, incomplete or boundedly rational exercise, stochastic surrender intensity, non-financial motives, behavioral lapse-based pricing, surrender contagion, and heterogeneous policyholder behavior \parencite{DeGiovanni.2010,Eling.Kochanski.2013,Li.Szimayer.2014,Bauer.2017,Russo.2017,Gottlieb.Smetters.2021}. Within this strand, a closely related paper is by \textcite{Cheng.2023}, who model surrender contagion in a heterogeneous pool with financially guided and herd-driven policyholders. 

Third, the paper connects life-insurance contract design to the bank-run and global-games literature \parencites{Diamond.Dybvig.1983}{Carlsson.1993}{Morris.2002}{Goldstein.2005}. A conceptually related paper is by \textcite{Foley.2020}, who identify a self-fulfilling component in the run by institutional investors on puttable funding-agreement-backed securities issued by U.S. life insurers during the 2007--2008 crisis.

The remainder of the paper is organized as follows. Sections \ref{model}--\ref{runs} present the model, surrender frictions, and equilibrium run thresholds. Section \ref{compstat} derives comparative statics. Section \ref{parametrisation} maps the primitives into observable magnitudes and provides numerical illustrations. Section \ref{conclusion} concludes.

\section{Model}\label{model}

We extend the framework of \textcite{Forstemann.2021}. Consider a life insurer with two policyholders, indexed by $i\in\{A,B\}$. Time is discrete, with dates $t=0,1,2$. The insurer has initial equity $E$ and pays no dividends before date $t=2$. Let $y_0$ denote the annualized yield on a default-free zero-coupon bond from date $t=0$ to date $t=2$, and let $y_1$ denote the yield on the default-free zero-coupon bond from date $t=1$ to date $t=2$.\footnote{We assume throughout that $1+y_t>0$ for $t\in\{0,1\}$.}

At date $t=0$, policyholder $i$ opens a life-insurance policy with premium $S_{i,0}>0$. The contractual claim associated with policy $i$ at date $t$ is
\[
    S_{i,t}:=S_{i,0}(1+\lambda y_0)^t,
    \qquad t\in\{0,1,2\},
\]
where $\lambda\in(0,1)$ is the policyholder's profit-participation parameter.

The insurer invests initial equity and all premiums in the default-free zero-coupon bond maturing at $t=2$. Hence, absent premature surrender, the asset payoff at maturity is
\[
	X_2
	:=
	(E+S_{A,0}+S_{B,0})(1+y_0)^2.
\]
We assume that initial equity is nonnegative and that, absent premature surrender, the insurer can meet the contractual claims at maturity:
\[
	E\geq 0,
	\qquad
	X_2\geq S_{A,2}+S_{B,2}.
\]
At the interim date $t=1$, the market value of these assets is
\[
	A_1:=\frac{X_2}{1+y_1}.
\]

We say that contractual surrender claims are \emph{fully funded} in market value at $t=1$ if
\[
	A_1\geq S_{A,1}+S_{B,1}.
\]
Conversely, the insurer faces \emph{underfunded} contractual claims at $t=1$ if
\[
	A_1<S_{A,1}+S_{B,1}.
\]

At $t=1$, each policyholder $i\in\{A,B\}$ chooses an action
\[
	\alpha_i\in\{\textsf{S},\textsf{C}\},
\]
where $\textsf{S}$ denotes surrender and $\textsf{C}$ denotes continuation.

For $j\neq i$, policyholder $i$'s surrender payout at $t=1$ is denoted by $x_{i,1}(\alpha_j)$ and is defined as
\[
	x_{i,1}(\alpha_j)
	:=
	\begin{cases}
		S_{i,1},
		& \text{if } A_1\geq S_{A,1}+S_{B,1},\\[0.35em]
		\min\{S_{i,1},A_1\},
		& \text{if } A_1<S_{A,1}+S_{B,1}
		\text{ and } \alpha_j=\textsf{C},\\[0.35em]
		\dfrac{S_{i,1}}{S_{i,1}+S_{j,1}}A_1,
		& \text{if } A_1<S_{A,1}+S_{B,1}
		\text{ and } \alpha_j=\textsf{S}.
	\end{cases}
\]
The three cases correspond to full payment, capped individual recovery, and pro-rata recovery under joint surrender.

If policyholder $i$ continues at $t=1$, her monetary payout at maturity is denoted by $x_{i,2}(\alpha_j)$ and is defined as
\[
	x_{i,2}(\alpha_j)
	=
	\begin{cases}
		\min\{S_{i,2},X_2-x_{j,1}(\textsf{C})(1+y_1)\},
		& \text{if } A_1<S_{A,1}+S_{B,1}
		\text{ and } \alpha_j=\textsf{S},\\[0.35em]
		S_{i,2},
		& \text{otherwise}.
	\end{cases}
\]
The term \(x_{j,1}(\textsf{C})(1+y_1)\) is the terminal value of the assets liquidated at \(t=1\) to pay policyholder \(j\)'s surrender claim. In the intermediate region studied below, \(x_{j,1}(\textsf{C})=S_{j,1}\), so the assets remaining for policyholder \(i\) at maturity are \(X_2-S_{j,1}(1+y_1)\).

The payoff comparison at $t=1$ is made in terminal-date units. If policyholder $i$ surrenders at $t=1$, her surrender proceeds can be reinvested for one period at the default-free market rate $y_1$. The terminal value of surrendering is therefore
\[
	x_{i,1}(\alpha_j)(1+y_1).
\]
If policyholder $i$ continues, her monetary payout at $t=2$ is
$x_{i,2}(\alpha_j)$.

\section{Surrender frictions}\label{frictions}

We assume that continuation provides an additional payoff-equivalent benefit that is proportional to the realized maturity payout. For $\psi\geq 0$, the continuation benefit is
\[
    \psi x_{i,2}(\alpha_j).
\]
The total $t=2$ payoff-equivalent value of continuation is then
\[
    x_{i,2}(\alpha_j)(1+\psi).
\]

The benefit $\psi x_{i,2}(\alpha_j)$ is a reduced-form payoff-equivalent surrender friction.
It represents a policyholder's value of maintaining the life-insurance contract beyond the monetary savings payoff.
The proportional specification is exact when the continuation benefits scale with the recoverable policy claim and is a local approximation otherwise.

The analysis focuses on surrender incentives generated by market-value underfunding and strategic interaction. We therefore maintain a lower bound \(\psi\geq\psi'\), derived below, which rules out strict surrender incentives while aggregate surrender claims remain fully funded. Without this restriction, surrender can be privately optimal for all policyholders whenever \(y_1>\lambda y_0\), independently of underfunding; see Remark~\ref{nobenefits}.

Empirical evidence supports the presence of economically relevant surrender frictions. The Bundesbank's 2023 household survey reports that around one quarter of owners of life-insurance policies with guaranteed returns would cancel if a low-risk bank investment yielded at least \(6\%\), while more than half would not surrender regardless of the alternative interest rate. The main stated obstacle was the loss of insurance cover against life risks; other impediments included cancellation fees, upfront fees, effort, and loss of lifelong pension protection \parencites[pp.~45--46]{bundesbank2023fsr}[p.~23]{bundesbank2023bophh44}. The Financial Stability Committee and the Bundesbank's 2025 review report the same qualitative assessment \parencites[p.~20]{afs2024eleventh}{bundesbank2025fsr}.

The parameter \(\psi\) captures the payoff-dependent component of continuation value: the payoff-equivalent of retained contract features whose value varies with the recoverable continuation payout. The most relevant examples are biometric and supplementary risk cover, such as death, accident, invalidity, or occupational-disability cover, and pension or annuitization protection. Embedded guarantees, bonus rights, tax advantages, and subsidy benefits are captured by this parameter to the extent that their value is tied to the recoverable policy claim.

Fixed surrender costs, such as information and transaction costs, are not payoff-dependent in general. They are represented by $\psi$ only as local proportional equivalents around the relevant surrender comparison. They shift individual exercise thresholds, but are not by themselves a source of strategic complementarity.
Retained contractual cancellation fees are distinct because they increase cash flows to the insurer as well as the residual asset pool. They are therefore excluded from the analysis.

\section{A strategic surrender game}\label{runs}

We now consider a static pure-strategy surrender game between two policyholders. There are two contracts with premiums $S_{A,0}$ and $S_{B,0}$ and a single decision date $t=1$. A pair
\[
	(\alpha_A,\alpha_B)\in\{\textsf{S},\textsf{C}\}^2
\]
is called a \emph{strategy profile}.

For $i\in\{A,B\}$ and $j\neq i$, define policyholder $i$'s terminal surrender value at $t=1$, conditional on $j$'s action, by
\begin{align}
	V_i^S(\alpha_j)
	:=
	x_{i,1}(\alpha_j)(1+y_1),
	\label{eq:VS-def}
\end{align}
where the surrender proceeds are reinvested for one period at the market rate $y_1$. Define the continuation value by
\begin{align}
	V_i^C(\alpha_j)
	:=
	x_{i,2}(\alpha_j)(1+\psi).
	\label{eq:VC-def}
\end{align}
The factor $1+\psi$ is the gross payoff-dependent continuation benefit introduced in Section~\ref{frictions}. All payoffs in the surrender game are expressed in terminal-date units.

We adopt a tie-breaking convention: whenever a policyholder is indifferent between surrender and continuation, she continues. Hence policyholder $i$'s best-response rule is
\begin{align}
	\text{BR}_i(\alpha_j)
	:=
	\begin{cases}
		\textsf{S}, & \text{if } V_i^S(\alpha_j)>V_i^C(\alpha_j),\\
		\textsf{C}, & \text{if } V_i^S(\alpha_j)\leq V_i^C(\alpha_j).
	\end{cases}
	\label{eq:BR-def}
\end{align}
Equivalently, $i$ surrenders if and only if
\begin{align}
	x_{i,1}(\alpha_j)(1+y_1)>x_{i,2}(\alpha_j)(1+\psi).
	\label{rulepsi}
\end{align}
A strategy profile $(\alpha_A,\alpha_B)$ is a \emph{Nash equilibrium} under the continuation tie-break rule if
\begin{align}
	\alpha_A=\text{BR}_A(\alpha_B)
	\quad\text{and}\quad
	\alpha_B=\text{BR}_B(\alpha_A).
	\label{eq:NE-def}
\end{align}

Our objective is to separate parameter regions with a unique equilibrium from those with multiple equilibria. When the equilibrium is unique and equal to $(\textsf{C},\textsf{C})$, surrender runs are precluded. When the equilibrium is unique and equal to $(\textsf{S},\textsf{S})$, surrender is fundamentals-driven. When both $(\textsf{C},\textsf{C})$ and $(\textsf{S},\textsf{S})$ are equilibria at the same interest rate, surrender decisions exhibit strategic complementarity and the life insurer is vulnerable to self-fulfilling runs.

The normal form of the game can be summarized as follows. In each cell, the first component is policyholder $A$'s payoff and the second component is policyholder $B$'s payoff, both measured in terminal-date units:
\begin{align*}
\begin{array}{c|cc}
	& \alpha_B=\textsf{C} & \alpha_B=\textsf{S} \\ \hline
	\alpha_A=\textsf{C}
	&
	\left(x_{A,2}(\textsf{C})(1+\psi),\;x_{B,2}(\textsf{C})(1+\psi)\right)
	&
	\left(x_{A,2}(\textsf{S})(1+\psi),\;x_{B,1}(\textsf{C})(1+y_1)\right)
	\\[0.35em]
	\alpha_A=\textsf{S}
	&
	\left(x_{A,1}(\textsf{C})(1+y_1),\;x_{B,2}(\textsf{S})(1+\psi)\right)
	&
	\left(x_{A,1}(\textsf{S})(1+y_1),\;x_{B,1}(\textsf{S})(1+y_1)\right).
\end{array}
\end{align*}
The table identifies the strategic channel: one policyholder's surrender can reduce the other policyholder's realized continuation payoff and, because the continuation benefit is proportional to that payoff, also the value of continuation.

The interest-rate threshold at which aggregate surrender claims are exactly funded in market value is
\begin{align}
	y^*
	:=
	\frac{X_2}{S_{A,1}+S_{B,1}}-1.
	\label{eq:ystar-game}
\end{align}
At this threshold, the insurer can exactly pay both surrender claims from asset proceeds. Above it, aggregate surrender claims exceed the market value of assets at $t=1$.

Define
\begin{align}
	y^{\max}
	:=
	\min\left\{
		\frac{X_2}{S_{A,1}}-1,\;
		\frac{X_2}{S_{B,1}}-1
	\right\}.
	\label{eq:ymax}
\end{align}
We focus on the intermediate underfunding region in which the insurer can cover one surrender claim but not both simultaneously:
\begin{align}
	y^*<y_1\leq y^{\max}.
	\label{game.eq.1}
\end{align}
In this region, one policyholder’s surrender reduces the assets backing the other policyholder’s continuation claim.

The threshold
\begin{align}
	\hat y:=(1+\psi)(1+\lambda y_0)-1
	\label{eq:haty-small}
\end{align}
is the interest rate above which surrender becomes a strict best response to the other policyholder's continuation. Indeed, if $j$ continues and the insurer can cover one surrender claim, then
\[
	V_i^S(\textsf{C})=S_{i,1}(1+y_1),
	\qquad
	V_i^C(\textsf{C})=S_{i,2}(1+\psi),
\]
so $V_i^S(\textsf{C})>V_i^C(\textsf{C})$ is equivalent to $y_1>\hat y$.

To isolate surrender runs generated under market-value underfunding rather than rate-driven surrender under full funding, we impose the no-surrender condition under full funding. This condition is equivalent to the lower bound
\begin{align}
	\psi'
	:=
	\frac{1+y^*}{1+\lambda y_0}-1
	=
	\frac{X_2}{S_{A,2}+S_{B,2}}-1
	\label{eq:psi-prime}
\end{align}
on the continuation-benefit parameter. Indeed,
\begin{align}
	\hat y-y^*
	=
	(1+\lambda y_0)(\psi-\psi').
	\label{eq:hat-minus-ystar}
\end{align}
Because \(1+\lambda y_0>0\), continuation is selected throughout the full-funding region \(y_1\leq y^*\) if and only if \(\psi\geq\psi'\). For the rest of this section and the existence results below, we maintain \(\psi\geq\psi'\), with strict inequalities interpreted using the tie-breaking convention in \eqref{eq:BR-def}.

At the lower bound,
\(
    \hat y=y^*.
\)
For \(\psi>\psi'\), the surrender-against-continuation threshold lies strictly above the underfunding threshold.

\begin{remark}\label{nobenefits}
If \(\psi=0\), then \(\hat y(0)=\lambda y_0\). Maturity solvency implies
\[
    X_2\geq (S_{A,0}+S_{B,0})(1+\lambda y_0)^2,
\]
and therefore
\[
    1+y^*
    =
    \frac{X_2}{(S_{A,0}+S_{B,0})(1+\lambda y_0)}
    \geq
    1+\lambda y_0.
\]
Hence \(y^*\geq \lambda y_0\), with strict inequality under strict maturity solvency. In the full-funding region, the payoff comparison is
\[
    V_i^S-V_i^C
    =
    S_{i,0}(1+\lambda y_0)(y_1-\lambda y_0),
\]
so surrender is strictly optimal whenever \(y_1>\lambda y_0\). Thus, under strict maturity solvency, there is a nonempty full-funding interval \((\lambda y_0,y^*]\) on which joint surrender is already the unique Nash equilibrium under the continuation tie-break rule. Without continuation benefits, \(y^*\) is therefore not the first rate at which surrender becomes privately optimal, and the model cannot isolate surrender runs generated by market-value underfunding and strategic interaction. The same comparison also implies that surrender is a strict best response to surrender throughout the intermediate region. This motivates the lower bound \(\psi\geq\psi'\).
\end{remark}

Next, define, for $i\in\{A,B\}$ and $j\neq i$,
\begin{align}
	\tilde y_i
	:=
	\frac{X_2-S_{i,2}}{S_{j,1}}-1.
	\label{eq:tilde-y}
\end{align}
\(\tilde y_i\) is the regime-switch threshold for policyholder \(i\)'s continuation payoff after policyholder \(j\) surrenders. For \(y_1\leq \tilde y_i\), the remaining terminal assets after \(j\)'s surrender are still sufficient to pay \(i\)'s full contractual maturity claim, so \(x_{i,2}(\textsf{S})=S_{i,2}\). For \(y_1>\tilde y_i\), the remaining terminal assets are insufficient, so \(i\)'s continuation payoff is impaired and equals \(X_2-S_{j,1}(1+y_1)\).

Finally, define
\begin{align}
	\bar y_i
	:=
	\frac{X_2}{S_{j,1}}
	\left(
		1-\frac{S_{i,0}}{(S_{A,0}+S_{B,0})(1+\psi)}
	\right)-1.
	\label{eq:bary-i}
\end{align}
The threshold $\bar y_i$ is obtained from the comparison between the pro-rata surrender payoff and the impaired continuation payoff when the remaining asset value is below $S_{i,2}$. If \(\psi\geq\psi'\), then
\begin{align}
	\bar y_i-\tilde y_i=
	\frac{S_{i,0}(1+\lambda y_0)}{S_{j,0}}
	\frac{\psi-\psi'}{1+\psi}
	\geq 0.
	\label{eq:bary-minus-tilde}
\end{align}
Under \(\psi\geq\psi'\), the selected surrender-against-surrender threshold is \(\bar y_i\).

\begin{lemma}[best responses]\label{lemma1}
	Let \(\psi\geq\psi'\). Within the intermediate region \eqref{game.eq.1},
	\begin{itemize}
		\item $\text{\upshape BR}_i(\textsf{\upshape C})=\textsf{\upshape S}$ if and only if
		\(
			y_1>\hat y;
		\)
		\item $\text{\upshape BR}_i(\textsf{\upshape S})=\textsf{\upshape S}$ if and only if
		\(
			y_1>\bar y_i.
		\)
	\end{itemize}
\end{lemma}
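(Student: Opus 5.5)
The proof will compute both payoffs explicitly in each case and reduce each best-response condition to a linear inequality in $y_1$. Throughout we work inside the intermediate region \eqref{game.eq.1}, so $A_1<S_{A,1}+S_{B,1}$ and $y_1\le y^{\max}$. The latter gives $S_{i,1}(1+y_1)\le X_2$, i.e.\ $S_{i,1}\le A_1$, for both $i$.

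\emph{Response to continuation.} With $\alpha_j=\textsf{C}$ and underfunding, $x_{i,1}(\textsf{C})=\min\{S_{i,1},A_1\}=S_{i,1}$, while $x_{i,2}(\textsf{C})=S_{i,2}$ by the ``otherwise'' branch. Rule \eqref{rulepsi} then reads
\[
S_{i,1}(1+y_1)>S_{i,2}(1+\psi).
\]
Dividing by $S_{i,1}>0$ and using $S_{i,2}=S_{i,1}(1+\lambda y_0)$, this is exactly $y_1>\hat y$, as already noted after \eqref{eq:haty-small}. This part is immediate.

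\emph{Response to surrender.} With $\alpha_j=\textsf{S}$, the pro-rata payout is
\[
x_{i,1}(\textsf{S})(1+y_1)=\frac{S_{i,1}}{S_{A,1}+S_{B,1}}X_2=\frac{S_{i,0}}{S_{A,0}+S_{B,0}}X_2,
\]
which does not depend on $y_1$. The continuation payoff is
\[
x_{i,2}(\textsf{S})=\min\{S_{i,2},\,X_2-S_{j,1}(1+y_1)\},
\]
using $x_{j,1}(\textsf{C})=S_{j,1}$ from the first step. I split at $\tilde y_i$ as defined in \eqref{eq:tilde-y}.

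\begin{itemize}
\item[(a)] \emph{Case $y_1>\tilde y_i$.} The continuation payoff is $X_2-S_{j,1}(1+y_1)$. The surrender condition
\[
\frac{S_{i,0}}{S_{A,0}+S_{B,0}}X_2>\bigl(X_2-S_{j,1}(1+y_1)\bigr)(1+\psi)
\]
rearranges, after dividing by $(1+\psi)S_{j,1}>0$, to $y_1>\bar y_i$ as defined in \eqref{eq:bary-i}.
\item[(b)] \emph{Case $y_1\le\tilde y_i$.} The continuation payoff is $S_{i,2}$. Surrender would require $\frac{S_{i,0}}{S_{A,0}+S_{B,0}}X_2>S_{i,2}(1+\psi)$, which is equivalent to $X_2/(S_{A,2}+S_{B,2})>1+\psi$, i.e.\ $\psi<\psi'$ by \eqref{eq:psi-prime}. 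Under $\psi\ge\psi'$ this fails, so $\text{BR}_i(\textsf{S})=\textsf{C}$ in this case.
\end{itemize}

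\emph{Combining the cases.} Using \eqref{eq:bary-minus-tilde}, we have $\tilde y_i\le\bar y_i$. If $y_1>\bar y_i$, then $y_1>\tilde y_i$, so case (a) applies and $i$ surrenders. If $y_1\le\bar y_i$, then either $y_1\le\tilde y_i$, where case (b) gives continuation, or $\tilde y_i<y_1\le\bar y_i$, where case (a) gives continuation. Hence $\text{BR}_i(\textsf{S})=\textsf{S}$ if and only if $y_1>\bar y_i$.

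The main obstacle is the bookkeeping in the surrender case rather than any deep step. Two points need care. First, the equivalence in case (b) with $\psi<\psi'$ relies on the identity $S_{i,2}/S_{i,0}=(1+\lambda y_0)^2$ being common to both policies. Second, the weak inequalities must be checked against the continuation tie-break, so that boundary points such as $y_1=\tilde y_i$ land on the correct side of the comparison.
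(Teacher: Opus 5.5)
Your proposal is correct and follows essentially the same route as the paper's proof. You reduce the response to continuation directly to $y_1>\hat y$, split the response to surrender at $\tilde y_i$, rule out strict surrender on the capped branch via $\psi\geq\psi'$, get $y_1>\bar y_i$ on the impaired branch, and combine the two cases using $\bar y_i\geq\tilde y_i$. You are also slightly more explicit than the paper, which only asserts $x_{i,1}(\textsf{C})=S_{i,1}$, whereas you derive it from $y_1\leq y^{\max}$.
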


\begin{proof}
Appendix \ref{sec:proof_lemma1}.
\end{proof}

Define the joint surrender-against-surrender threshold and the strict joint-surrender cutoff by
\[
        \bar y:=\max\{\bar y_A,\bar y_B\},
        \qquad
       T_{\textsf{SS}}:=\max\{\hat y,\bar y\}.
\]
The interval
\begin{align}
	\hat Y
	:=
	\left(\bar y,\,\hat y\right]
	\cap
	\left(y^*,y^{\max}\right]
	\label{eq:coordination-region}
\end{align}
is the coordination region under the continuation tie-break rule. In this interval, continuation is the selected best response to continuation, while surrender is the selected best response to surrender. The interval
\begin{align}
	\overline Y
	:=
	\left(T_{\textsf{SS}},\,y^{\max}\right]
	\label{eq:fundamental-region}
\end{align}
is the fundamentals-driven run region within the intermediate region. In this interval, surrender is the selected best response to both continuation and surrender for both policyholders.

The following proposition characterizes the pure-strategy equilibria in the intermediate region.

\begin{proposition}[pure-strategy equilibria]\label{proposition2}
	For \(\psi\geq\psi'\) the pure-strategy Nash equilibria within the intermediate region \eqref{game.eq.1} are characterized as follows:	
	\begin{itemize}
		\item \((\textsf{\upshape C},\textsf{\upshape C})\) is a Nash equilibrium if and only if \(y_1\leq \hat y\);
		\item \((\textsf{\upshape S},\textsf{\upshape S})\) is a Nash equilibrium if and only if \(y_1>\bar y\);
		\item \((\textsf{\upshape S},\textsf{\upshape C})\) is a Nash equilibrium if and only if 
\(y_1>\hat y\) and \(y_1\leq \bar y_B\);
		\item \((\textsf{\upshape C},\textsf{\upshape S})\) is a Nash equilibrium if and only if
\(y_1>\hat y\) and \(y_1\leq \bar y_A\).
	\end{itemize}
\end{proposition}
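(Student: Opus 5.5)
The plan is to reduce each of the four equilibrium conditions to a pair of best-response conditions and then read them off from Lemma~\ref{lemma1}. By \eqref{eq:NE-def}, a profile \((\alpha_A,\alpha_B)\) is a Nash equilibrium if and only if \(\alpha_A=\text{BR}_A(\alpha_B)\) and \(\alpha_B=\text{BR}_B(\alpha_A)\). Throughout, \(\psi\geq\psi'\) and \(y_1\) lies in the intermediate region \eqref{game.eq.1}, so Lemma~\ref{lemma1} applies to both policyholders. Together with the binary action set, the lemma gives the complementary statements \(\text{BR}_i(\textsf{C})=\textsf{C}\) iff \(y_1\leq\hat y\), and \(\text{BR}_i(\textsf{S})=\textsf{C}\) iff \(y_1\leq\bar y_i\). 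The tie-breaking convention \eqref{eq:BR-def} is already built into these weak and strict inequalities.

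The four cases then follow in turn.

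\emph{Case \((\textsf{C},\textsf{C})\).} We need \(\text{BR}_A(\textsf{C})=\textsf{C}\) and \(\text{BR}_B(\textsf{C})=\textsf{C}\). The threshold \(\hat y\) does not depend on \(i\), so both conditions reduce to the single condition \(y_1\leq\hat y\).

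\emph{Case \((\textsf{S},\textsf{S})\).} We need \(y_1>\bar y_A\) and \(y_1>\bar y_B\). This is equivalent to \(y_1>\max\{\bar y_A,\bar y_B\}=\bar y\).

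\emph{Case \((\textsf{S},\textsf{C})\).} Policyholder \(A\) surrenders against \(B\)'s continuation, so \(\text{BR}_A(\textsf{C})=\textsf{S}\), i.e.\ \(y_1>\hat y\). Policyholder \(B\) continues against \(A\)'s surrender, so \(\text{BR}_B(\textsf{S})=\textsf{C}\), i.e.\ \(y_1\leq\bar y_B\).

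\emph{Case \((\textsf{C},\textsf{S})\).} By symmetry in the indices, the conditions are \(y_1>\hat y\) and \(y_1\leq\bar y_A\).

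There is essentially no analytical obstacle once Lemma~\ref{lemma1} is granted. The only points needing care are bookkeeping ones:
\begin{itemize}
\item In the asymmetric profiles, the correct index must be attached to \(\bar y_i\). For \((\textsf{S},\textsf{C})\) the relevant threshold is that of the \emph{continuing} player \(B\), since \(\bar y_i\) governs \(i\)'s response to \(j\)'s surrender.
\item The tie-break convention must yield weak inequalities for continuation and strict inequalities for surrender.
\item The characterization must be stated only within \eqref{game.eq.1}, where the payout formulas underlying Lemma~\ref{lemma1} hold. In particular, \(x_{j,1}(\textsf{C})=S_{j,1}\) because \(y_1\leq y^{\max}\).
\end{itemize}
No further computation of payoffs is required.
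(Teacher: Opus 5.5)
Your proposal is correct and follows the paper's own proof: it applies the two best-response equivalences of Lemma~\ref{lemma1}, together with their complements under the continuation tie-break, to the Nash condition \eqref{eq:NE-def} case by case. Your index bookkeeping for the off-diagonal profiles and your check that \(x_{j,1}(\textsf{C})=S_{j,1}\) on \((y^*,y^{\max}]\) spell out details the paper leaves implicit.
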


\begin{proof}
Appendix \ref{sec:proof_prop2}.
\end{proof}

Proposition~\ref{proposition2} classifies equilibria for a given continuation-benefit parameter \(\psi\).\footnote{The off-diagonal equilibria represent strategic substitutability. Contract-size heterogeneity can make exit incentives asymmetric and select between the two off-diagonal profiles.} The next two theorems show when the corresponding run regions are nonempty for some admissible value of \(\psi\).

Define
\[
    \psi^U:=\frac{1+y^{\max}}{1+\lambda y_0}-1.
\]
\(\psi^U\) is the largest continuation-benefit parameter for which the best-response threshold for surrender against continuation has not yet exited the intermediate underfunding region, \(\hat y(\psi^U)=y^{\max}\). 

We now state our main results. Theorem \ref{theorem1} gives an existence condition for self-fulfilling runs. Theorem \ref{theorem2} gives an existence condition for fundamentals-driven runs.
\begin{theorem}[self-fulfilling runs]\label{theorem1}
There exist \(\psi\geq\psi'\) and \(y_1\) in the intermediate region \eqref{game.eq.1} such that
\(
    (\emph{\textsf{\upshape C}},\emph{\textsf{\upshape C}})
\)
and
\(
    (\emph{\textsf{\upshape S}},\emph{\textsf{\upshape S}})
\)
are both Nash equilibria under the continuation tie-break rule at \(y_1\) if and only if \(\bar y(\psi^U)<y^{\max}.\)
\end{theorem}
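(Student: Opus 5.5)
By Proposition~\ref{proposition2}, both \((\textsf{C},\textsf{C})\) and \((\textsf{S},\textsf{S})\) are Nash equilibria at \(y_1\) exactly when \(\bar y(\psi)<y_1\leq\hat y(\psi)\). The statement therefore reduces to asking whether some \(\psi\geq\psi'\) makes the coordination region \(\hat Y(\psi)=(\bar y(\psi),\hat y(\psi)]\cap(y^*,y^{\max}]\) nonempty. I would study the dependence of the two thresholds on \(\psi\), which is explicit.

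First, \(\hat y(\psi)=(1+\psi)(1+\lambda y_0)-1\) is continuous and strictly increasing. It equals \(y^*\) at \(\psi'\) by \eqref{eq:hat-minus-ystar} and equals \(y^{\max}\) at \(\psi^U\). Second, by \eqref{eq:bary-i}, each \(\bar y_i(\psi)\) is continuous and strictly increasing in \(\psi\), since \(1-S_{i,0}/((S_{A,0}+S_{B,0})(1+\psi))\) increases in \(\psi\). Hence \(\bar y=\max\{\bar y_A,\bar y_B\}\) is also continuous and strictly increasing.

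\emph{Sufficiency.} Suppose \(\bar y(\psi^U)<y^{\max}\). Take \(\psi=\psi^U\) and \(y_1=y^{\max}\). Then \(y_1\leq\hat y(\psi^U)=y^{\max}\) and \(y_1>\bar y(\psi^U)\). We also need \(y^{\max}>y^*\), so that \(y_1\) lies in the intermediate region. This holds because \(S_{j,1}>0\) gives \(X_2/S_{i,1}>X_2/(S_{A,1}+S_{B,1})\). Finally \(\psi^U\geq\psi'\), because \(y^{\max}\geq y^*\) and \(\hat y\) is increasing. Proposition~\ref{proposition2} then yields both equilibria.

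\emph{Necessity.} Suppose \(\psi\geq\psi'\) and \(y_1\in(y^*,y^{\max}]\) support both equilibria, so \(\bar y(\psi)<y_1\leq\min\{\hat y(\psi),y^{\max}\}\). I split into two cases.
\begin{itemize}
\item If \(\psi\leq\psi^U\), then monotonicity of \(\bar y\) gives \(\bar y(\psi^U)\geq\bar y(\psi)\). This is the wrong direction for the conclusion.
\end{itemize}
This is the main obstacle. Monotonicity alone fails here, and one must compare growth rates. The claim to prove is that \(\hat y(\psi)-\bar y(\psi)\) is nondecreasing in \(\psi\). Once that holds, \(\bar y(\psi)<y_1\leq\hat y(\psi)\) gives \(\hat y(\psi)-\bar y(\psi)>0\). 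Moreover, \(y_1\leq y^{\max}\) together with \(y_1>\bar y(\psi)\) gives the needed bound at \(\psi^U\):
\[
\bar y(\psi^U)\leq \bar y(\psi)+\bigl(\hat y(\psi^U)-\hat y(\psi)\bigr)<\hat y(\psi^U)=y^{\max}.
\]
To verify the derivative comparison, compute \(\hat y'(\psi)=1+\lambda y_0\) and
\[
\bar y_i'(\psi)=\frac{X_2 S_{i,0}}{S_{j,1}(S_{A,0}+S_{B,0})(1+\psi)^2}.
\]
The latter is decreasing in \(\psi\), so its largest value on \([\psi',\infty)\) is at \(\psi'\). Substituting \(1+\psi'=X_2/(S_{A,2}+S_{B,2})\) gives
\[
\bar y_i'(\psi')=(1+\lambda y_0)\,\frac{S_{i,0}(S_{A,0}+S_{B,0})(1+\lambda y_0)^2}{S_{j,0}\,X_2}.
\]
It then suffices to show this is at most \(1+\lambda y_0\) on the relevant range. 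This should follow from \(X_2\geq S_{A,2}+S_{B,2}\) together with the condition \(y_1\leq y^{\max}\), and I would check whether an additional size restriction is hidden in it. If the derivative comparison fails for very unequal sizes, my fallback is a direct argument: use the fact that the binding \(i\) at \(y^{\max}\) is the one with the larger \(S_{j,1}\), and show by explicit algebra that \(\bar y(\psi^U)\geq y^{\max}\) forces \(\bar y(\psi)\geq \min\{\hat y(\psi),y^{\max}\}\) for all \(\psi\geq\psi'\).
\begin{itemize}
\item If \(\psi>\psi^U\), then \(\hat y(\psi)>y^{\max}\), so the binding constraint is \(y_1\leq y^{\max}\). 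This gives \(\bar y(\psi)<y^{\max}\).
\end{itemize}
To push this back to \(\psi^U\), I again use monotonicity: \(\bar y(\psi^U)\leq\bar y(\psi)<y^{\max}\). So this case is immediate.
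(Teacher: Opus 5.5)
\textbf{There is a genuine gap in the necessity direction, in the case $\psi\le\psi^U$.} The reduction via Proposition~\ref{proposition2} is correct. So is sufficiency: taking $\psi=\psi^U$ and $y_1=y^{\max}$ works, since the tie-break makes $(\textsf{C},\textsf{C})$ an equilibrium at $y_1=\hat y$. The case $\psi>\psi^U$ is also fine.

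The lemma you propose for $\psi\le\psi^U$, that $\hat y-\bar y$ is nondecreasing on $[\psi',\infty)$, is false in general. Your hoped-for bound $\bar y_i'(\psi')\le 1+\lambda y_0$ is equivalent to $S_{i,0}(S_{A,0}+S_{B,0})(1+\lambda y_0)^2\le S_{j,0}X_2$. That is a restriction on primitives alone. Maturity solvency delivers it only for the smaller policy, and $y_1\le y^{\max}$ has nothing to say about it.

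A concrete counterexample: take $y_0=0$ (any $\lambda$), $S_{A,0}=1.1$, $S_{B,0}=1$ and $E=0.1$. Then $X_2=2.2$, $\psi'=1/21$ and $\psi^U=y^{\max}=1$. Here $\bar y=\bar y_A$ for $\psi>0$ and $\bar y_A'(\psi')=1.05>1=\hat y'$, so $\hat y-\bar y$ strictly decreases near $\psi'$. Yet $\bar y(\psi^U)\approx 0.624<y^{\max}$, so the theorem has content here while your lemma fails. Your fallback ("explicit algebra'') is only a sketch and does not close the case.

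\textbf{The missing idea is an anchor at $\psi'$, not a global slope bound.} You already observed that $\bar y_i'$ is decreasing. Since $\hat y$ is affine, this means $D_i:=\bar y_i-\hat y$ is concave. Combine this with
\[
D_i(\psi')=\tilde y_i-y^*=\frac{S_{i,0}\,(X_2-S_{A,2}-S_{B,2})}{S_{j,0}(1+\lambda y_0)(S_{A,0}+S_{B,0})}\ge 0,
\]
which follows from maturity solvency.

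The argument then runs as follows:
\begin{enumerate}
\item At a coordination point, $D_i(\psi)\le\bar y(\psi)-\hat y(\psi)<0$ for each $i$, so $\psi>\psi'$.
\item The mean value theorem gives $\xi\in(\psi',\psi)$ with $D_i'(\xi)<0$.
\item Since $D_i'$ is decreasing, $D_i$ is strictly decreasing on $[\xi,\infty)\supseteq[\psi,\psi^U]$.
\item Hence $D_i(\psi^U)\le D_i(\psi)<0$, that is, $\bar y_i(\psi^U)<\hat y(\psi^U)=y^{\max}$ for $i=A,B$.
\end{enumerate}
This single-crossing argument (once $D_i$ is negative it stays negative) is what the paper's proof uses. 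It needs only that $D_i$ decreases on $[\psi,\psi^U]$, which holds whenever coordination occurs, rather than the false global monotonicity you tried to establish.
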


\begin{proof}
Appendix \ref{sec:proof_theorem1}.
\end{proof}
Payoff-dependent continuation benefits can make the best-response thresholds cross, generating a coordination region in which both joint continuation and joint surrender are Nash equilibria. The relevant boundary is \(\psi^U\), where the surrender-against-continuation threshold reaches the upper end of the intermediate region.

Let
\[
   \tilde y:=\max\left\{\tilde y_A,\tilde y_B\right\}.
\]
\begin{theorem}[fundamentals-driven runs]\label{theorem2}
There exist \(\psi\geq\psi'\) and \(y_1\) in the intermediate region \eqref{game.eq.1} such that $(\textsf{\upshape S},\textsf{\upshape S})$ is the unique Nash equilibrium under the continuation tie-break rule at \(y_1\) if and only if \(\tilde y<y^{\max}.\)
\end{theorem}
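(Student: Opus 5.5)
The plan is to reduce Theorem~\ref{theorem2} to a comparison of thresholds using Proposition~\ref{proposition2}, and then to use the identities \eqref{eq:hat-minus-ystar} and \eqref{eq:bary-minus-tilde} to locate the boundary case \(\psi=\psi'\).

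\textbf{Step 1 (uniqueness of \((\textsf{S},\textsf{S})\) as a threshold condition).} Fix \(\psi\geq\psi'\) and \(y_1\) in \eqref{game.eq.1}. I will show that \((\textsf{S},\textsf{S})\) is the unique Nash equilibrium if and only if \(y_1>T_{\textsf{SS}}=\max\{\hat y,\bar y\}\).

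For the forward direction, Proposition~\ref{proposition2} says that \((\textsf{S},\textsf{S})\) being an equilibrium requires \(y_1>\bar y\). It also says that \((\textsf{C},\textsf{C})\) failing to be an equilibrium requires \(y_1>\hat y\).

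For the converse, suppose \(y_1>\max\{\hat y,\bar y\}\). Then \((\textsf{S},\textsf{S})\) is an equilibrium and \((\textsf{C},\textsf{C})\) is not. The off-diagonal profiles are excluded as follows. \((\textsf{S},\textsf{C})\) would need \(y_1\leq\bar y_B\), and \((\textsf{C},\textsf{S})\) would need \(y_1\leq\bar y_A\). Both are impossible because \(y_1>\bar y=\max\{\bar y_A,\bar y_B\}\).

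Hence the theorem is equivalent to the following claim: there exist \(\psi\geq\psi'\) and \(y_1\in(y^*,y^{\max}]\) with \(y_1>T_{\textsf{SS}}(\psi)\).

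\textbf{Step 2 (necessity).} Suppose such \(\psi\) and \(y_1\) exist. By \eqref{eq:bary-minus-tilde}, \(\bar y_i(\psi)\geq\tilde y_i\) for each \(i\) whenever \(\psi\geq\psi'\). Note that \(\tilde y_i\) does not depend on \(\psi\). Therefore
\[
\tilde y\leq \bar y(\psi)\leq T_{\textsf{SS}}(\psi)<y_1\leq y^{\max},
\]
which gives \(\tilde y<y^{\max}\).

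\textbf{Step 3 (sufficiency).} Assume \(\tilde y<y^{\max}\) and set \(\psi=\psi'\). By \eqref{eq:hat-minus-ystar}, \(\hat y(\psi')=y^*\). By \eqref{eq:bary-minus-tilde}, \(\bar y_i(\psi')=\tilde y_i\). Hence \(T_{\textsf{SS}}(\psi')=\max\{y^*,\tilde y\}\).

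Next I check that \(y^*<y^{\max}\). Since \(X_2>0\) and \(S_{A,1},S_{B,1}>0\), we have \(X_2/(S_{A,1}+S_{B,1})<X_2/S_{i,1}\) for each \(i\), so \(y^*<y^{\max}\). Combined with the assumption \(\tilde y<y^{\max}\), this gives \(T_{\textsf{SS}}(\psi')<y^{\max}\).

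Now choose \(y_1=y^{\max}\). This lies in \eqref{game.eq.1} and satisfies \(y_1>T_{\textsf{SS}}(\psi')\). By Step~1, \((\textsf{S},\textsf{S})\) is then the unique Nash equilibrium.

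The only delicate point is Step~1. There one must make sure the off-diagonal profiles are ruled out by \(y_1>\bar y\), which uses that \(\bar y\) is the maximum of \(\bar y_A\) and \(\bar y_B\), not the minimum. Everything else follows directly from the two identities stated before Lemma~\ref{lemma1}.
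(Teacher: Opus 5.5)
Your proof is correct and follows essentially the same route as the paper's. Like the paper, you use Proposition~\ref{proposition2} to reduce uniqueness of $(\textsf{S},\textsf{S})$ to $y_1>T_{\textsf{SS}}(\psi)$, and then show that the binding case is $\psi=\psi'$, where $\hat y=y^*$ and $\bar y_i=\tilde y_i$. The only cosmetic differences are that you invoke the identity \eqref{eq:bary-minus-tilde} directly for necessity where the paper appeals to monotonicity in $\psi$, and that you exhibit $y_1=y^{\max}$ and rule out the off-diagonal profiles explicitly.
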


\begin{proof}
Appendix \ref{sec:proof_theorem2}.
\end{proof}
A fundamentals-driven run exists when joint surrender is the unique Nash equilibrium for some admissible continuation benefit and interim rate.\footnote{The condition in Theorem~\ref{theorem2} is weaker than the condition in Theorem~\ref{theorem1}. Indeed, since $\psi^U>\psi'$ and $\bar y$ is strictly increasing in $\psi$, while $\bar y_i(\psi')=\tilde y_i$ for $i\in\{A,B\}$,
\(
    \tilde y
    =\bar y(\psi')
    <\bar y(\psi^U)
    <y^{\max}.
\)}
Since the relevant surrender thresholds are increasing in \(\psi\), the most favorable case for fundamentals-driven surrender is the smallest admissible continuation benefit \(\psi'\).

\begin{figure}[H]
\centering
\begin{tikzpicture}[x=1.05cm,y=1cm]


\node[anchor=west,font=\small] at (0,1.15)
    {\textbf{(a) Coordination region:}
    $\bar y<\hat y$};

\draw[->] (0,0) -- (12,0) node[right] {$y_1$};

\draw (1,0.12) -- (1,-0.12)
    node[above=4pt] {$y^*$};

\draw (4.3333,0.12) -- (4.3333,-0.12)
    node[above=4pt] {$\bar y$};

\draw (7.6667,0.12) -- (7.6667,-0.12)
    node[above=4pt] {$\hat y$};

\draw (11,0.12) -- (11,-0.12)
    node[above=4pt] {$y^{\max}$};

\draw[
    decorate,
    decoration={brace,mirror,amplitude=5pt}
]
    (1,-0.38) -- (4.3333,-0.38)
    node[midway,below=8pt,font=\scriptsize,align=center]
    {\((\textsf{C},\textsf{C})\)};

\draw[
    decorate,
    decoration={brace,mirror,amplitude=5pt}
]
    (4.3333,-0.38) -- (7.6667,-0.38)
    node[midway,below=8pt,font=\scriptsize,align=center]
    {\((\textsf{C},\textsf{C})\), \((\textsf{S},\textsf{S})\)};

\draw[
    decorate,
    decoration={brace,mirror,amplitude=5pt}
]
    (7.6667,-0.38) -- (11,-0.38)
    node[midway,below=8pt,font=\scriptsize,align=center]
    {\((\textsf{S},\textsf{S})\)};


\node[anchor=west,font=\small] at (0,-1.80)
    {\textbf{(b) No coordination region:}
    $\hat y \leq \bar y$};

\draw[->] (0,-2.95) -- (12,-2.95) node[right] {$y_1$};

\draw (1,-2.83) -- (1,-3.07)
    node[above=4pt] {$y^*$};

\draw (4.3333,-2.83) -- (4.3333,-3.07)
    node[above=4pt] {$\hat y$};

\draw (7.6667,-2.83) -- (7.6667,-3.07)
    node[above=4pt] {$\bar y$};

\draw (11,-2.83) -- (11,-3.07)
    node[above=4pt] {$y^{\max}$};

\draw[
    decorate,
    decoration={brace,mirror,amplitude=5pt}
]
    (1,-3.33) -- (4.3333,-3.33)
    node[midway,below=8pt,font=\scriptsize,align=center]
    {\((\textsf{C},\textsf{C})\)};

\draw[
    decorate,
    decoration={brace,mirror,amplitude=5pt}
]
    (4.3333,-3.33) -- (7.6667,-3.33)
    node[midway,below=8pt,font=\scriptsize,align=center]
    {\((\textsf{C},\textsf{S})\), \((\textsf{S},\textsf{C})\)};

\draw[
    decorate,
    decoration={brace,mirror,amplitude=5pt}
]
    (7.6667,-3.33) -- (11,-3.33)
    node[midway,below=8pt,font=\scriptsize,align=center]
    {\((\textsf{S},\textsf{S})\)};

\end{tikzpicture}
\caption{Schematic ordering of interest-rate thresholds for fixed $\psi$ and equal contract sizes. Panel (a) depicts a nonempty coordination region. Panel (b) depicts the case in which the coordination region is empty, but fundamentals-driven runs remain possible.}
\label{fig:threshold-ordering}
\end{figure}

\section{Comparative statics}\label{compstat}

This section presents the comparative statics of the two best-response thresholds and of the joint-surrender cutoff. Proposition~\ref{prop:comp_stat_surrender_cutoff} separates the return-comparison threshold \(\hat y\) from the strategic threshold \(\bar y\). The former depends only on credited contractual growth and continuation benefits; the latter also depends on the own-funds buffer and policy-size distribution because it applies after the other policyholder surrenders.

\vspace{2mm}

Let \(s:=S_{A,0}+S_{B,0}\), \(L:=\max\{S_{A,0},S_{B,0}\}\), and \(\theta:=L/s\).

\begin{proposition}[comparative statics]\label{prop:comp_stat_surrender_cutoff}

Under the maintained assumptions, and for parameter variations that preserve \(\psi\geq\psi'\), the following hold:
\begin{enumerate}
	\item[\textup{(i)}] \(\hat y\) is strictly increasing in \(\psi\) and \(y_0\). It is strictly increasing, constant, or strictly decreasing in \(\lambda\) according as \(y_0>0\), \(y_0=0\), or \(y_0<0\).
	\item[\textup{(ii)}] \(\bar y\) is strictly increasing in \(E\), \(\psi\), and \(y_0\). It is strictly decreasing, constant, or strictly increasing in \(\lambda\) according as \(y_0>0\), \(y_0=0\), or \(y_0<0\).
	\item[\textup{(iii)}] \(T_{\emph{\textsf{SS}}}\) is weakly increasing in \(E\) and strictly increasing in \(\psi\) and \(y_0\). For \(y_0\neq0\), \(T_{\emph{\textsf{SS}}}\) has no unconditional monotone comparative static in \(\lambda\).
	\item[\textup{(iv)}]  Holding \(s\) fixed, \(\bar y\) is weakly increasing in \(\theta\), strictly if \(\psi>0\). Holding \(\theta\) and \(E\) fixed, \(\bar y\) is weakly decreasing in \(s\), strictly if \(E>0\).
\end{enumerate}
\end{proposition}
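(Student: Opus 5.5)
Everything reduces to explicit closed forms, so the plan is to write each threshold in terms of the primitives and differentiate, or compare directly. Throughout, use \(X_2=(E+s)(1+y_0)^2\) and \(S_{j,1}=S_{j,0}(1+\lambda y_0)\). This gives
\[
	\bar y_i=\frac{(E+s)(1+y_0)^2}{S_{j,0}(1+\lambda y_0)}\Bigl(1-\frac{S_{i,0}}{s(1+\psi)}\Bigr)-1 .
\]

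\textbf{Part (i).} The statement is immediate from \(\hat y=(1+\psi)(1+\lambda y_0)-1\). Its partial derivatives are \(\partial_\psi\hat y=1+\lambda y_0>0\), \(\partial_{y_0}\hat y=(1+\psi)\lambda>0\), and \(\partial_\lambda\hat y=(1+\psi)y_0\), whose sign is that of \(y_0\). Positivity of \(1+\lambda y_0\) follows from \(1+y_0>0\) and \(\lambda\in(0,1)\).

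\textbf{Part (ii).} The plan is to show each \(\bar y_i\) moves in the claimed direction; the maximum \(\bar y\) then inherits the strict monotonicity, since a max of strictly increasing functions is strictly increasing.
\begin{itemize}
	\item The factor \(1-S_{i,0}/(s(1+\psi))\) is positive, because \(S_{i,0}/s<1\le 1+\psi\) once \(\psi\ge\psi'\ge0\). The latter bound follows from maturity solvency and \(E\ge0\).
	\item Given this, \(\bar y_i\) is strictly increasing in \(E\) (linear with positive coefficient) and in \(\psi\).
	\item In \(y_0\), the log-derivative of \((1+y_0)^2/(1+\lambda y_0)\) is \(2/(1+y_0)-\lambda/(1+\lambda y_0)\). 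This is positive because \(2(1+\lambda y_0)>\lambda(1+y_0)\), which is equivalent to \(2+\lambda y_0>\lambda\). That inequality holds since \(\lambda y_0>-\lambda\).
	\item In \(\lambda\), the only dependence is through \(1/(1+\lambda y_0)\), whose derivative has the sign of \(-y_0\).
\end{itemize}

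\textbf{Part (iii).} \(T_{\textsf{SS}}=\max\{\hat y,\bar y\}\) is a max of a function constant in \(E\) and a function strictly increasing in \(E\), so it is weakly increasing in \(E\). It is a max of two strictly increasing functions in \(\psi\) and in \(y_0\), hence strictly increasing in both. For \(\lambda\) with \(y_0\ne0\), \(\hat y\) and \(\bar y\) move in opposite directions, so I would exhibit two admissible parametrizations: one where \(\hat y>\bar y\), so that \(T_{\textsf{SS}}\) inherits \(\hat y\)'s direction, and one where \(\bar y>\hat y\), so that it inherits the opposite direction. These show that no unconditional sign exists. Feasibility of both orderings follows from Figure~\ref{fig:threshold-ordering}, realized for instance by large versus small \(E\), since \(\bar y\) grows in \(E\) while \(\hat y\) does not.

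\textbf{Part (iv).} Write \(\bar y_i+1=\frac{(E+s)(1+y_0)^2}{(1+\lambda y_0)}\cdot\frac{1}{s}\cdot g(\sigma_i)\), with \(\sigma_i=S_{i,0}/s\) and
\[
	g(\sigma)=\frac{1-\sigma/(1+\psi)}{1-\sigma}.
\]
The maximum over \(i\) is attained at the larger policy \(i\) with share \(\theta\), because \(g\) is weakly increasing: \(g'(\sigma)\) has the sign of \(1-1/(1+\psi)=\psi/(1+\psi)\ge0\), strictly if \(\psi>0\). So with \(s\) fixed, \(\bar y=C\,g(\theta)-1\) is weakly increasing in \(\theta\), strictly if \(\psi>0\). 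With \(\theta\) and \(E\) fixed, the only \(s\)-dependence is \((E+s)/s=1+E/s\), which is weakly decreasing in \(s\), strictly iff \(E>0\).

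\textbf{Main obstacle.} The main subtlety is the qualifier "for parameter variations that preserve \(\psi\ge\psi'\)": \(\psi'\) itself depends on \(E\), \(y_0\) and \(\lambda\). This only matters for admissibility, not for the derivative signs, so I would state that all derivatives are taken within the admissible set. The non-monotonicity example in (iii) is the one step needing care, because it must respect \(\psi\ge\psi'\), maturity solvency, and \(\hat y\) versus \(\bar y\) ordering simultaneously. I would first try equal contract sizes with \(E=0\) and then large \(E\).
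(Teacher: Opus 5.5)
Your proposal is correct and follows the paper's proof essentially step by step. It uses the same closed forms for $\hat y$ and $\bar y_i$ and the same derivative signs; your log-derivative condition $2+\lambda y_0>\lambda$ is the paper's $2-\lambda+\lambda y_0>0$. It uses the same max-inheritance for $\bar y$ and $T_{\textsf{SS}}$, and the same reduction to the larger policy in (iv), where your monotone $g(\sigma)$ is equivalent to the paper's computation of $\bar y_A-\bar y_B$.

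The only difference is in the $\lambda$-claim of (iii). You certify it with explicit configurations on each branch, which is more complete than the paper's branch-dependence remark (illustrated numerically in Table~\ref{tab:fixed-psi-lambda-compstat}). Rather than appeal to the schematic Figure~\ref{fig:threshold-ordering} or vary $E$, the cleanest lever is $\psi$ itself: $\bar y_i(\psi')-\hat y(\psi')>0$ under strict maturity solvency, while $\hat y-\bar y_i\to\infty$ as $\psi\to\infty$. Take $\psi$ strictly above $\psi'$ so that small changes in $\lambda$ remain admissible.
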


\begin{proof}
Appendix \ref{sec:proof_prop3}.
\end{proof}

The effect of \(\lambda\) is branch-dependent. For \(y_0>0\), a higher \(\lambda\) raises \(\hat y\) but lowers \(\bar y\) by increasing the other policyholder's interim surrender payment; for \(y_0<0\), the signs reverse. The selected cutoff \(T_{\textsf{SS}}\) therefore inherits the local sign of the binding branch. Greater policy-size asymmetry raises \(\bar y\), except at \(\psi=0\), because the larger policy determines the binding surrender-against-surrender threshold.

\section{Parametrization}\label{parametrisation}

This section maps the objects from Proposition \ref{prop:comp_stat_surrender_cutoff} into economically interpretable threshold levels under the normalization \(S_{A,0}=S_{B,0}=1\). It first links the model parameters to empirical magnitudes and then illustrates how capitalization, profit participation, and continuation benefits shape strict joint surrender and coordination.

\subsection{Equity}

We interpret \(E\) as a stylized own-funds buffer. The regulatory analog is the Solvency II balance sheet surplus, i.e. the excess of assets over liabilities on the market-consistent Solvency II valuation basis \parencite[Arts.~75 and 88]{SolvencyIIDirective}.\footnote{Solvency II capital adequacy is defined as eligible own funds relative to the Solvency Capital Requirement, set at one-year 99.5\% Value-at-Risk of basic own funds \parencite[Art.~101]{SolvencyIIDirective}. We abstract from regulatory features with no direct counterpart in the model, including subordinated liabilities and ancillary own funds \parencite[Arts.~88--89]{SolvencyIIDirective} and own-funds tiering and eligibility restrictions \parencite[Art.~93]{SolvencyIIDirective}; see also \textcite[Art.~82]{DelegatedRegulation2015_35} for quantitative tier eligibility limits.} 

Under the normalization \(S_{A,0}=S_{B,0}=1\), the ratio
\[
    e:=\frac{E}{E+2}
\]
is an asset-based equity share, while the corresponding equity-over-liabilities ratio is
\[
    \frac{E}{2}=\frac{e}{1-e}.
\]
Thus \(e=5\%\) corresponds to \(E/2\simeq5.3\%\), and \(e=10\%\) corresponds to \(E/2\simeq11.1\%\), or \(A/L\simeq111.1\%\).
As an external benchmark, EIOPA's 2024 insurance stress test baseline reports an aggregate assets-over-liabilities ratio of \(111.3\%\) and excess assets over liabilities of EUR \(656\) billion, implying \(e=1-1/1.113\simeq10.15\%\) in our normalization
\parencite{EIOPA2024StressTest}.
We therefore use the numerical interval \(e\in[0,10\%]\). 

\subsection{Base interest rate and profit participation}

The baseline rate \(y_0\) is interpreted as the annualized return on the insurer's backing asset pool, or equivalently as a low-risk term-structure benchmark for the insurer's pre-shock asset payoff. The parametrization uses a base rate of \(y_0=2\%\). The profit-participation parameter \(\lambda\) maps the gross return entering the contract into the credited contractual growth rate \(\lambda y_0\).

EIOPA's Costs and Past Performance Report provides broad European evidence on asset returns and profit participation for insurance-based investment products: its 2026 survey covers undertakings representing more than \(60\%\) of EEA unit-linked and profit participation premium, and reports annualized and product-level net returns for profit participation insurance-based investment products, net of costs and weighted by gross written premiums \parencite[p.~3; pp.~12--15, figs.~8 and~10--13]{EIOPACPP2026}. 

According to IVASS, Italian \emph{gestioni separate} had an average gross return of \(2.5\%\) in 2021, an average return retroceded to policyholders of \(1.4\%\), and an average retained return of \(1.1\%\) \parencite{IVASSGese2021}. In 2024, Italian separately managed accounts without a profit fund had an average gross return of \(2.8\%\) and an average return retroceded to policyholders of \(1.7\%\). The same report mentions \(3.6\%\) gross and \(2.1\%\) retroceded for the separately managed accounts with a profit fund \parencite[pp.~11--12, fig.~I.8]{IVASSAnnual2024}. These figures imply pass-through ratios of \(0.56\), \(0.607\), and \(0.583\).

French evidence points to a similar order of magnitude. ACPR reports a 2021 average revaluation rate of \(1.28\%\) for individual euro-denominated life-insurance contracts and an asset-return measure of \(2.2\%\), implying a pass-through ratio of \(0.58\) \parencite{ACPRRevalorisation2021}. For 2024, ACPR reports an average revaluation rate of \(2.63\%\) and an average asset return of \(2.5\%\) \parencite[pp.~4, 7, and~14]{ACPRRevalorisation2024}. An adjustment for bonuses of 40 basis points yields a ratio of \(0.89\) \parencite[p. 7]{ACPRRevalorisation2024}.

German law provides a regulatory benchmark for the pass-through ratio: the MindZV minimum-allocation rule ties the allocation from eligible investment income for surplus-entitled life-insurance contracts to \(90\%\) of the relevant investment result, subject to the statutory formula and guaranteed-interest deductions \parencite[\S~6(1)]{MindZV2016}. We therefore take \(\lambda\in[0.5,0.9]\) for the comparative statics grid. The lower end is consistent with the Italian evidence on retained returns and allows for costs, guarantees, and reserve building; the upper end is the benchmark implied by German law.

\subsection{Interim interest rate}

The interim interest rate \(y_1\) of the two-period model corresponds to a one-period yield equivalent of the market-value shock to the assets funding surrender payments. In the model,
\[
    A_1(y_1)
    =
    \frac{X_2}{1+y_1},
    \qquad
    A_1(y_0)
    =
    \frac{X_2}{1+y_0}.
\]
Hence a portfolio-value loss \(h\), measured relative to the ex ante $t=1$ asset value, satisfies
\[
    1-h
    =
    \frac{A_1(y_1)}{A_1(y_0)}
    =
    \frac{1+y_0}{1+y_1}.
\]
The one-period shock equivalent is therefore
\begin{equation}
    y_1
    =
    \frac{1+y_0}{1-h}-1.
    \label{eq:y1-effective}
\end{equation}

Equation~\eqref{eq:y1-effective} can also be read in the opposite
direction. If a figure reports an effective rate shock
\[
    \Delta y:=y_1-y_0,
\]
then the corresponding market-value loss is
\[
    h(\Delta y;y_0)
    =
    1-\frac{1+y_0}{1+y_0+\Delta y}.
\]
The numerical example below (Table \ref{tab:tss-asset-loss-thresholds}) produces threshold interest rates between \(3.8\%\) and \(26.6\%\) at \(\lambda=0.7\), corresponding to asset-value shocks between \(1.7\%\) and \(19.4\%\).

\subsection{Numerical analysis}

Let
\[
    q:=1+\lambda y_0,
    \qquad
    a(e,\lambda):=
    \frac{2(1+y_0)^2}{(1-e)q^2}.
\]
Then the coordination-onset benefit can be written as
\[
    \psi^\times(e,\lambda)
    :=
    \frac{a(e,\lambda)+
    \sqrt{a(e,\lambda)^2-2a(e,\lambda)}}{2}-1.
\]
The corresponding coordination-onset interest rate is
\[
    y^\times(e,\lambda)
    :=
    \hat y\left(\psi^\times(e,\lambda)\right)
    =
    \bar y\left(\psi^\times(e,\lambda);e\right).
\]

Write \(\bar y(\psi):=\bar y_A(\psi)=\bar y_B(\psi)\). The coordination-onset boundary is defined by
\[
    \hat y(\psi^\times)=\bar y(\psi^\times;e).
\]

The following table summarizes the parameter ranges used as inputs for the computations and figures below. 
\begin{table}[H]
\centering
\caption{Baseline design and admissible ranges}
\label{tab:equal-contract-compstat-design}
\small
\begin{tabular*}{\textwidth}{@{\extracolsep{\fill}}p{0.39\textwidth}p{0.55\textwidth}@{}}
\toprule
Object & Value or range \\
\midrule
Contract sizes & \(S_{A,0}=S_{B,0}=1\) \\
Base interest rate & \(y_0=2\%\) \\
Profit participation & \(\lambda\in[0.5,0.9]\) \\
Own-funds & \(e\in[0,10\%]\) \\
Interest-rate shocks (Figure 3) & \(y_1-y_0\in[0,25]\) pp \\
Continuation-benefit & \(\psi\in[\psi'(e,\lambda),100\%]\) \\
Full-funding lower bound & \(\psi'=X_2/[2(1+\lambda y_0)^2]-1\) \\
\bottomrule
\end{tabular*}

\vspace{0.35em}
\begin{minipage}{\textwidth}
\footnotesize
\textbf{Notes:} The upper intermediate-region bound does not bind over this grid: for \(e\in[0,10\%]\) and \(\lambda\in[0.5,0.9]\), \(\psi^U(e,\lambda)\ge100.787\%\). Thus the range \([\psi'(e,\lambda),100\%]\) lies inside the intermediate-region domain.
\end{minipage}
\end{table}
The next table reports threshold values by own-funds ratio and profit participation. The value \(\psi^\times\) is the coordination boundary, \(y^\times\) is the corresponding rate.
\begin{table}[H]
\centering
\caption{Continuation-benefit and interest-rate thresholds as a function of own funds}
\label{tab:coordination-onset-equal-contracts}
\small
\begin{tabular*}{\textwidth}{@{\extracolsep{\fill}}lcccc@{}}
\toprule
Own-funds ratio \(e\)
& \(\psi'\)
& \(\psi^\times\)
& \(y^\times\)
& \(T_{\mathsf{SS}}(\psi')-y_0\) \\
\midrule
\multicolumn{5}{@{}l}{\(\lambda=0.5\)} \\
\(0\%\)  & \(1.990\%\)  & \(16.236\%\) & \(17.399\%\) & \(3.020\) pp \\
\(2\%\)  & \(4.071\%\)  & \(24.656\%\) & \(25.902\%\) & \(7.224\) pp \\
\(5\%\)  & \(7.358\%\)  & \(35.464\%\) & \(36.818\%\) & \(13.863\) pp \\
\(10\%\) & \(13.322\%\) & \(52.177\%\) & \(53.699\%\) & \(25.911\) pp \\
\midrule
\multicolumn{5}{@{}l}{\(\lambda=0.7\)} \\
\(0\%\)  & \(1.187\%\)  & \(12.146\%\) & \(13.716\%\) & \(1.807\) pp \\
\(2\%\)  & \(3.252\%\)  & \(21.576\%\) & \(23.278\%\) & \(5.995\) pp \\
\(5\%\)  & \(6.513\%\)  & \(32.850\%\) & \(34.710\%\) & \(12.607\) pp \\
\(10\%\) & \(12.430\%\) & \(49.813\%\) & \(51.910\%\) & \(24.608\) pp \\
\midrule
\multicolumn{5}{@{}l}{\(\lambda=0.9\)} \\
\(0\%\)  & \(0.393\%\)  & \(6.677\%\)  & \(8.597\%\)  & \(0.601\) pp \\
\(2\%\)  & \(2.442\%\)  & \(18.259\%\) & \(20.388\%\) & \(4.772\) pp \\
\(5\%\)  & \(5.677\%\)  & \(30.171\%\) & \(32.514\%\) & \(11.359\) pp \\
\(10\%\) & \(11.548\%\) & \(47.439\%\) & \(50.093\%\) & \(23.312\) pp \\
\bottomrule
\end{tabular*}

\vspace{0.35em}
\begin{minipage}{\textwidth}
\footnotesize
\textbf{Notes:} The table reports composite objects that depend on \(\lambda\) through \(\psi'(e,\lambda)\) and \(\psi^\times(e,\lambda)\). For \(y_0>0\), increasing \(\lambda\) raises the \(\hat y\)-branch and lowers the \(\bar y\)-branch. Hence the sign of the local effect on \(T_{\mathsf{SS}}\) depends on which branch binds.
\end{minipage}
\end{table}

The final column of Table~\ref{tab:coordination-onset-equal-contracts} shows strict joint-surrender cutoffs as shocks above \(y_0\). These thresholds can also be expressed in asset-loss units. For any cutoff \(T_{\mathsf{SS}}\), define
\[
    h_{\mathsf{SS}}
    :=
    1-\frac{A_1(T_{\mathsf{SS}})}{A_1(y_0)}
    =
    1-\frac{1+y_0}{1+T_{\mathsf{SS}}}.
\]
Because \(h(y_1;y_0)\) is strictly increasing in \(y_1\), the condition \(y_1>T_{\mathsf{SS}}\) is equivalent to \(h(y_1;y_0)>h_{\mathsf{SS}}\). The following table applies this conversion to the baseline \(\lambda=0.7\) rows of Table~\ref{tab:coordination-onset-equal-contracts}.

Table~\ref{tab:tss-asset-loss-thresholds} expresses the model cutoffs in asset-loss units, allowing comparison with observed insurer portfolio losses. Italy in 2022--2023 provides a useful benchmark. IVASS reports that the rise in interest rates in 2022 generated a negative balance of capital gains and losses of EUR~51.6 billion, equal to \(9.8\%\) of the book value of insurers' securities portfolios excluding linked securities \parencite[p.~7]{IVASSAnnual2022}. It also reports market-value declines of \(16.4\%\) for investments net of unit-linked contracts, \(20.3\%\) for bond investments, and \(24.8\%\) for Italian government bonds \parencite[p.~7; pp.~46--47]{IVASSAnnual2022}. These losses were followed by a substantial increase in surrenders in 2023: IVASS reports that surrenders rose by EUR~32 billion, or \(59\%\), relative to 2022 \parencite{IVASSQuaderno31}. The asset-loss thresholds implied by the model are thus of the same order of magnitude as the reported Italian losses during a period of surrender stress.\footnote{The same period also saw the failure of the Italian life insurer Eurovita, whose crisis followed the 2022 rise in interest rates and spreads, weak solvency, large unrealized fixed-income losses, and rapid surrenders before IVASS suspended surrender rights on 6 February 2023 \parencite{BankOfItalyFSR2023,IVASSAnnual2022,IVASSOrderEurovita2023}.}

\begin{table}[H]
\centering
\caption{Strict joint-surrender cutoffs in asset-loss units for \(\lambda=0.7\)}
\label{tab:tss-asset-loss-thresholds}
\small
\begin{tabular*}{\textwidth}{@{\extracolsep{\fill}}lccc@{}}
\toprule
Own-funds ratio \(e\)
& \(T_{\mathsf{SS}}(\psi')-y_0\)
& \(T_{\mathsf{SS}}(\psi')\)
& \(h_{\mathsf{SS}}\) \\
\midrule
\(0\%\)  & \(1.807\) pp  & \(3.807\%\)  & \(1.741\%\) \\
\(2\%\)  & \(5.995\) pp  & \(7.995\%\)  & \(5.551\%\) \\
\(5\%\)  & \(12.607\) pp & \(14.607\%\) & \(11.001\%\) \\
\(10\%\) & \(24.608\) pp & \(26.608\%\) & \(19.436\%\) \\
\bottomrule
\end{tabular*}

\vspace{0.35em}
\begin{minipage}{\textwidth}
\footnotesize
\textbf{Notes:} The table translates the strict joint-surrender cutoff at the full-funding lower bound \(\psi'\) into the equivalent asset-value loss threshold \(h_{\mathsf{SS}}=1-(1+y_0)/(1+T_{\mathsf{SS}})\).
\end{minipage}
\end{table}

Table \ref{tab:fixed-psi-lambda-compstat} illustrates the \(\lambda\)-comparative static in Proposition~\ref{prop:comp_stat_surrender_cutoff}.

\begin{table}[H]
\centering
\caption{Branch-dependent \(\lambda\)-comparative static}
\label{tab:fixed-psi-lambda-compstat}
\small
\begin{tabular*}{\textwidth}{@{\extracolsep{\fill}}lcccc@{}}
\toprule
\(\lambda\)
& \(T_{\mathsf{SS}}(5\%)\)
& active branch
& \(T_{\mathsf{SS}}(50\%)\)
& active branch \\
\midrule
\(0.5\) & \(10.117\%\) & \(\bar y\) & \(51.5\%\) & \(\hat y\) \\
\(0.7\) & \(9.683\%\)  & \(\bar y\) & \(52.1\%\) & \(\hat y\) \\
\(0.9\) & \(9.252\%\)  & \(\bar y\) & \(52.7\%\) & \(\hat y\) \\
\bottomrule
\end{tabular*}

\vspace{0.35em}
\begin{minipage}{\textwidth}
\footnotesize
\textbf{Notes:} The table fixes \(e=2\%\). At \(\psi=5\%\), the strategic branch binds and \(T_{\mathsf{SS}}\) decreases with \(\lambda\). At \(\psi=50\%\), the return-comparison branch binds and \(T_{\mathsf{SS}}\) increases with \(\lambda\).
\end{minipage}
\end{table}

The following figures illustrate the comparative statics of the key model objects for the parameters in Table \ref{tab:equal-contract-compstat-design}, with \(y_0=2\%\) and \(\lambda=0.7\) fixed.

\begin{figure}[H]
\centering
\begin{tikzpicture}
\begin{axis}[
    thresholdaxis,
    width=0.94\textwidth,
    height=0.45\textwidth,
    grid=none,
    xmin=0, xmax=70,
    ymin=0, ymax=70,
    ytick={0,20,40,60},
    xlabel={continuation-benefit parameter \(\psi\) (\%)},
    ylabel={strict joint-surrender cutoff (\%)},
    legend columns=2,
    legend style={
        at={(0.03,0.97)},
        anchor=north west,
        font=\footnotesize,
        draw=none,
        fill=white,
        fill opacity=0.9,
        text opacity=1
    },
]

\addplot[gray!70!black, solid, line width=0.7pt, domain=0:70, samples=220]
    {100*hatf(0.02,0.7,x/100)};
\addlegendentry{common \(\hat y\)-branch}

\addplot[black, dashdotted, line width=0.75pt, domain=3.252:21.576, samples=160]
    {100*barf(0.02,0.7,0.02,x/100)};
\addlegendentry{\(e=2\%\), \(\bar y\)-branch}

\addplot[black, dashed, line width=0.75pt, domain=6.513:32.850, samples=160]
    {100*barf(0.02,0.7,0.05,x/100)};
\addlegendentry{\(e=5\%\), \(\bar y\)-branch}

\addplot[black, dotted, line width=0.90pt, domain=12.430:49.813, samples=160]
    {100*barf(0.02,0.7,0.10,x/100)};
\addlegendentry{\(e=10\%\), \(\bar y\)-branch}

\end{axis}
\end{tikzpicture}\label{fig:tss-level-psi}
\end{figure}

\noindent Figure 2: Strict joint-surrender cutoff over the continuation-benefit range. For each \(e\), the relevant cutoff starts at the corresponding \(\psi'(e,\lambda)\). The common \(\hat y\)-branch is drawn over the full range. The strategic branch binds up to \(\psi^\times(e)\), after which the \(\hat y\)-branch binds.



\begin{figure}[H]
\centering
\begin{tikzpicture}
\begin{axis}[
    thresholdaxis,
    width=0.94\textwidth,
    height=0.45\textwidth,
    grid=none,
    xmin=0.5, xmax=10,
    ymin=0, ymax=30,
    ytick={0,10,20,30},
    xlabel={own-funds ratio \(e=E/(E+2)\) (\%)},
    ylabel={continuation-benefit parameter \(\psi\) (\%)},
    legend columns=2,
    legend style={
        at={(0.50,-0.22)},
        anchor=north,
        font=\footnotesize,
        draw=none,
        fill=none
    },
]

\addplot[gray!70!black, densely dotted, line width=0.80pt, domain=0.5:10, samples=220]
    {100*psip(0.02,0.7,x/100)};
\addlegendentry{lower bound \(\psi'(e)\)}

\addplot[black, solid, line width=0.75pt, domain=0.5:3.839, samples=220]
    {100*psirunf(0.02,0.7,x/100,0.12)};
\addlegendentry{\(10\) pp shock}

\addplot[black, dashed, line width=0.75pt, domain=0.5:6.041, samples=220]
    {100*psirunf(0.02,0.7,x/100,0.17)};
\addlegendentry{\(15\) pp shock}

\addplot[black, dotted, line width=0.90pt, domain=0.5:10, samples=220]
    {100*psirunf(0.02,0.7,x/100,0.27)};
\addlegendentry{\(25\) pp shock}
\end{axis}
\end{tikzpicture}
\end{figure}

\noindent Figure 3: Inverted strict joint-surrender boundaries for fixed rate shocks. For a given shock, points on or above \(\psi'(e)\) and below the shock boundary satisfy \(y_1>T_{\mathsf{SS}}(e,\psi)\) and are thus in the strict joint-surrender region. A 10-point shock implies strict joint surrender only for thin capitalization and low benefits. Larger shocks expand the run region to higher own-funds and \(\psi\).\medskip


\section{Conclusion}\label{conclusion}

This paper studies mass-surrender risk in life insurance under interest-rate shocks. A higher market rate lowers the interim market value of the fixed-income assets backing outstanding contracts and raises the outside return available to policyholders. The analysis separates the insurer-side underfunding threshold from the equilibrium surrender thresholds faced by policyholders. The former identifies the rate at which the market value of assets no longer covers aggregate contractual surrender claims. Joint surrender, by contrast, is an equilibrium outcome determined by policyholders' payoff comparisons, and hence by asset values, contractual payout rules, outside returns, continuation benefits, and other policyholders' actions.

The model isolates this distinction in a two-player surrender game. In the intermediate underfunding region, one surrender claim can be paid in full but both cannot be paid simultaneously. A surrender by one policyholder can therefore dilute the other policyholder's continuation payoff. Because continuation benefits are proportional to the realized continuation payoff, this dilution also lowers the payoff-equivalent value of remaining in the contract and may thus create strategic complementarity.

The continuation-benefit parameter captures the payoff-dependent value of retained contract features, notably biometric and supplementary insurance cover. The lower bound on this parameter rules out strict surrender incentives under full market-value funding; above it, the surrender-against-continuation threshold lies strictly above the underfunding threshold.

Payoff-dependent continuation benefits have two effects. They raise surrender thresholds by increasing the value of continuation, but because they scale with the realized continuation payoff, they also transmit dilution of the residual asset pool into the value of continuation. When the best-response thresholds cross, this creates a coordination region with both joint continuation and joint surrender equilibria. In the parametrization, such regions arise only at relatively high continuation-benefit levels and sufficiently large effective rate shocks.

The comparative statics show that a larger own-funds buffer weakly raises the strict joint-surrender cutoff, while a larger continuation-benefit parameter raises it strictly. Profit participation has a branch-dependent effect because it moves the return-comparison and dilution branches in opposite directions. The parametrization maps these objects into effective interest-rate and asset-loss thresholds and illustrates that joint surrender depends on the joint configuration of capitalization, market-value losses, credited returns, contractual claims, and perceived continuation benefits.

The framework is structural in its mapping from interest-rate shocks, insurer assets, and contractual payout rules into policyholder payoffs, but restricts strategic interaction to a static two-player game. It abstracts from contract heterogeneity other than size, dynamic surrender waves, retained surrender charges, liquidity management, supervisory intervention, and endogenous asset sales. These restrictions allow the paper to derive closed-form thresholds and isolate the mechanism linking market-value underfunding, payoff-dependent continuation benefits, and strategic surrender incentives. Extensions with heterogeneous policyholder groups, dynamic surrender, or global-game selection could study partial runs, sequential exit, and equilibrium selection.

\appendix

\section{Proofs}

\subsection{Proof of Lemma \ref{lemma1}}\label{sec:proof_lemma1}

\begin{proof}
Fix $i\in\{A,B\}$ and let $j\neq i$. First suppose that $j$ chooses continuation. Under \eqref{eq:BR-def}, surrender is selected if and only if \(V_i^S(\alpha_j)>V_i^C(\alpha_j)\). In the intermediate region \eqref{game.eq.1}, \(x_{i,1}(\textsf{C})=S_{i,1}\) and \(x_{i,2}(\textsf{C})=S_{i,2}\). We thus have \(V_i^S(\textsf{C})>V_i^C(\textsf{C})\) if and only if \(S_{i,1}(1+y_1)>S_{i,2}(1+\psi)\), which is equivalent to \(1+y_1>(1+\psi)(1+\lambda y_0)\), or \(y_1>\hat y(\psi)\).

Now suppose that \(j\) chooses surrender. In the intermediate region,
\[
    x_{i,1}(\textsf{S})
    =
    \frac{S_{i,1}}{S_{A,1}+S_{B,1}}A_1,
\]
and therefore
\[
    V_i^S(\textsf{S})
    =
    \frac{S_{i,0}}{S_{A,0}+S_{B,0}}X_2.
\]
If \(i\) continues while \(j\) surrenders, then
\[
    x_{i,2}(\textsf{S})
    =
    \min\{S_{i,2},X_2-S_{j,1}(1+y_1)\}.
\]
The contractual cap binds if and only if \(y_1\leq\tilde y_i\). In that region,
\[
    V_i^C(\textsf{S})-V_i^S(\textsf{S})
    =
    S_{i,0}(1+\lambda y_0)^2(\psi-\psi')\geq0,
\]
so surrender is not selected. For \(y_1>\tilde y_i\), the continuation payoff is impaired, and
\[
    V_i^S(\textsf{S})>V_i^C(\textsf{S})
    \quad\Longleftrightarrow\quad
    y_1>\bar y_i(\psi).
\]
Finally, \(\bar y_i(\psi)-\tilde y_i\geq0\) under \(\psi\geq\psi'\). Hence the capped region never generates strict surrender, and the impaired-region comparison gives \(\text{\upshape BR}_i(\textsf{\upshape S})=\textsf{\upshape S}\) if and only if \(y_1>\bar y_i(\psi)\).
\end{proof}

\subsection{Proof of Proposition \ref{proposition2}}\label{sec:proof_prop2}

\begin{proof}
By Lemma~\ref{lemma1},
\[
    \text{\upshape BR}_i(\textsf{\upshape C})=\textsf{\upshape S}
    \Longleftrightarrow y_1>\hat y,
    \qquad
    \text{\upshape BR}_i(\textsf{\upshape S})=\textsf{\upshape S}
    \Longleftrightarrow y_1>\bar y_i.
\]
The four equilibrium conditions follow by applying these two best-response equivalences to the definition of Nash equilibrium in \eqref{eq:NE-def}, with continuation selected whenever the relevant strict inequality fails.
\end{proof}

\subsection{Proof of Theorem \ref{theorem1}}\label{sec:proof_theorem1}

\begin{proof}

Fix primitives satisfying the assumptions of Theorem~\ref{theorem1}. Maturity solvency implies
\[
    \psi'
    =
    \frac{X_2}{(S_{A,0}+S_{B,0})(1+\lambda y_0)^2}-1
    \geq0.
\]
Moreover,
\[
    y^*
    =
    \frac{X_2}{(S_{A,0}+S_{B,0})(1+\lambda y_0)}-1,
    \qquad
    y^{\max}
    =
    \frac{X_2}{\max\{S_{A,0},S_{B,0}\}(1+\lambda y_0)}-1,
\]
and \(\hat y(\psi^U)=y^{\max}\) by definition of \(\psi^U\). Since
\[
    \max\{S_{A,0},S_{B,0}\}<S_{A,0}+S_{B,0},
\]
we have \(y^*<y^{\max}\) and hence \(\psi^U>\psi'\).

By Proposition~\ref{proposition2}, both diagonal profiles are Nash equilibria under the continuation tie-break rule at some \(y_1\in(y^*,y^{\max}]\) if and only if there exists an admissible \(\psi\geq\psi'\) such that \(\hat Y\) is nonempty. We prove that this occurs if and only if
\(
    \bar y(\psi^U)<y^{\max}.
\)
Suppose this condition holds.

Since \(\hat y(\psi^U)=y^{\max}\) and \(y^*<y^{\max}\), we have
\[
    \max\{y^*,\bar y_A(\psi^U),\bar y_B(\psi^U)\}
    <
    \hat y(\psi^U).
\]
Choose any
\[
    y_1
    \in
    \left(
    \max\{y^*,\bar y_A(\psi^U),\bar y_B(\psi^U)\},
    \hat y(\psi^U)
    \right).
\]
Then \(y_1\in(y^*,y^{\max}]\), \(y_1<\hat y(\psi^U)\), and \(y_1>\max\{\bar y_A(\psi^U),\bar y_B(\psi^U)\}\).
By Proposition~\ref{proposition2}, \((\textsf{C},\textsf{C})\) and \((\textsf{S},\textsf{S})\) are both Nash equilibria under the continuation tie-break rule at \(y_1\). This proves sufficiency.

For necessity, suppose there exist \(\psi\geq\psi'\) and \(y_1\in(y^*,y^{\max}]\) such that both diagonal profiles are Nash equilibria.
By Proposition~\ref{proposition2}, \(y_1\leq\hat y\), and \(y_1>\bar y\).
Since \(y_1\leq y^{\max}\), this implies, for each \(i\in\{A,B\}\),
\begin{equation}
    \bar y_i<\min\{\hat y, y^{\max}\}.\label{eq:proof-theorem1-necessary-bound}
\end{equation}
Fix \(i\in\{A,B\}\) and let \(j\neq i\). Define
\[
    D_i:=\bar y_i-\hat y.
\]
Using the definitions of \(\bar y_i\) and \(\hat y\),
\[
    \frac{\partial\bar y_i}{\partial\psi}
    =
    \frac{
    X_2S_{i,0}
    }
    {
    S_{j,1}(S_{A,0}+S_{B,0})(1+\psi)^2
    }
    >0,
\]
and
\[
    \frac{\partial^2D_i}{\partial\psi^2}
    =
    -
    \frac{
    2X_2S_{i,0}
    }
    {
    S_{j,1}(S_{A,0}+S_{B,0})(1+\psi)^3
    }
    <0.
\]
Thus \(\bar y_i\) is strictly increasing and \(D_i\) is strictly concave. Moreover, at the lower bound,
\[
    D_i(\psi')
    =
    \bar y_i(\psi')-\hat y(\psi')
    =
    \frac{
    S_{i,0}\left(X_2-S_{A,2}-S_{B,2}\right)
    }
    {
    S_{j,0}(1+\lambda y_0)(S_{A,0}+S_{B,0})
    }
    \geq0,
\]
where the inequality follows from maturity solvency. Finally,
\[
    \lim_{\psi\to\infty}D_i(\psi)=-\infty,
\]
because \(\bar y_i\) is bounded above, whereas \(\hat y(\psi)\to+\infty\). Since \(D_i\) is concave, the set
\[
\{\psi \ge \psi' : D_i\ge 0\}
\]
is an interval, and its complement is an upper interval.

There are two cases. If \(\psi\geq\psi^U\), monotonicity of \(\bar y_i\) and \eqref{eq:proof-theorem1-necessary-bound} give
\[
    \bar y_i(\psi^U)\leq \bar y_i(\psi)<y^{\max}.
\]
If \(\psi<\psi^U\), then \(\hat y(\psi)<y^{\max}\), and \eqref{eq:proof-theorem1-necessary-bound} gives \(D_i(\psi)<0\). Since \(D_i\) is concave, \(D_i(\psi')\geq0\), and \(D_i(\psi)\to-\infty\), negativity persists to the right of the crossing; hence \(D_i(\psi^U)<0\). Using \(\hat y(\psi^U)=y^{\max}\), this gives \(\bar y_i(\psi^U)<y^{\max}\). The argument applies to both \(i=A,B\). Hence
\[
    \max\{\bar y_A(\psi^U),\bar y_B(\psi^U)\}<y^{\max}.
\]
\end{proof}

\subsection{Proof of Theorem \ref{theorem2}}\label{sec:proof_theorem2}

\begin{proof}
By Proposition~\ref{proposition2}, \((\textsf{S},\textsf{S})\) is the unique Nash equilibrium at some \(y_1\in(y^*,y^{\max}]\) if and only if, for some \(\psi\geq\psi'\),
\[
    \max\{\hat y(\psi),\bar y_A(\psi),\bar y_B(\psi)\}<y^{\max}.
\]
For \(\psi\geq\psi'\), all three thresholds are increasing in \(\psi\). Hence the left-hand side is minimized at \(\psi=\psi'\). At this lower bound, \(\hat y(\psi')=y^*\) and \(\bar y_i(\psi')=\tilde y_i\) for \(i=A,B\). Since \(y^*<y^{\max}\), the preceding inequality holds for some admissible \(\psi\) if and only if
\(
    \tilde y<y^{\max}.
\)
\end{proof}

\subsection{Proof of Proposition \ref{prop:comp_stat_surrender_cutoff}}\label{sec:proof_prop3}

\begin{proof}
Let
\[
        s:=S_{A,0}+S_{B,0},\qquad
        r:=1+y_0,\qquad
        q:=1+\lambda y_0,\qquad
        z:=1+\psi .
\]
Under the maintained assumptions, \(r>0\), \(q>0\), and \(z>0\). We also use
\[
        X_2=(E+s)r^2 .
\]

Since maturity solvency gives \(X_2\geq s q^2\), the lower bound satisfies
\[
    \psi'=\frac{X_2}{s q^2}-1\geq0.
\]
Hence, under \(\psi\geq \psi'\), \(z=1+\psi\geq1\).

For \(i\in\{A,B\}\) and \(j\neq i\), write \(S_i:=S_{i,0}\) and \(S_j:=S_{j,0}\). The best-response threshold for surrender against continuation is
\[
        \hat y=zq-1,
\]
whereas the best-response threshold for surrender against surrender for policyholder \(i\) is
\[
        \bar y_i
        =
        \frac{X_2}{S_j q}
        \left(1-\frac{S_i}{s z}\right)-1
        =
        \frac{(E+s)r^2}{S_j q}
        \left(1-\frac{S_i}{s z}\right)-1 .
\]

Within the intermediate region, the joint-surrender equilibrium is unique whenever
\[
        y_1>T_{\textsf{SS}}.
\]

Since \(q=(1-\lambda)+\lambda r\), \(\lambda\in(0,1)\), and \(r>0\), we have \(q>0\). Also \(z=1+\psi\geq1\). For \(i\in\{A,B\}\) and \(j\neq i\),
\[
        1-\frac{S_i}{sz}
        =
        \frac{s z-S_i}{s z}
        =
        \frac{S_j+\psi s}{s z}
        >0.
\]
The derivatives of \(\hat y=zq-1\) are
\[
        \frac{\partial \hat y}{\partial \psi}=q,
        \qquad
        \frac{\partial \hat y}{\partial y_0}=z\lambda,
        \qquad
        \frac{\partial \hat y}{\partial \lambda}=z y_0.
\]
This proves the stated monotonicity of \(\hat y\).

For the surrender-against-surrender branch,
\[
        \bar y_i+1
        =
        \frac{(E+s)r^2}{S_jq}
        \left(1-\frac{S_i}{s z}\right).
\]
Differentiating with respect to \(E\), \(\psi\), \(y_0\), and \(\lambda\), while holding the other primitives fixed, gives
\[
        \frac{\partial \bar y_i}{\partial E}
        =
        \frac{r^2}{S_jq}
        \left(1-\frac{S_i}{s z}\right)>0,
\]
\[
        \frac{\partial \bar y_i}{\partial \psi}
        =
        \frac{(E+s)r^2S_i}{S_jqs z^2}>0,
\]
\[
        \frac{\partial \bar y_i}{\partial y_0}
        =
        \frac{(E+s)r(2-\lambda+\lambda y_0)}{S_jq^2}
        \left(1-\frac{S_i}{s z}\right)>0,
\]
where \(2-\lambda+\lambda y_0=q+1-\lambda>0\), and
\[
        \frac{\partial \bar y_i}{\partial \lambda}
        =
        -
        \frac{(E+s)r^2y_0}{S_jq^2}
        \left(1-\frac{S_i}{s z}\right).
\]
Thus each \(\bar y_i\) is strictly increasing in \(E\), \(\psi\), and \(y_0\). Each \(\bar y_i\) is strictly decreasing in \(\lambda\) if \(y_0>0\), strictly increasing in \(\lambda\) if \(y_0<0\), and independent of \(\lambda\) if \(y_0=0\). Since \(\bar y=\max\{\bar y_A,\bar y_B\}\), the same monotonicity statements hold for \(\bar y\).

The cutoff \(T_{\textsf{SS}}=\max\{\hat y,\bar y_A,\bar y_B\}\) is a finite maximum. A finite maximum of weakly increasing functions is weakly increasing, and a finite maximum of strictly increasing functions is strictly increasing. Because \(\hat y\) is independent of \(E\) and \(\bar y_A,\bar y_B\) are strictly increasing in \(E\), \(T_{\textsf{SS}}\) is weakly increasing in \(E\). Because all three component thresholds are strictly increasing in \(\psi\) and \(y_0\), \(T_{\textsf{SS}}\) is strictly increasing in \(\psi\) and \(y_0\).

The \(\lambda\)-comparative static of \(T_{\mathsf{SS}}\) is branch-dependent. When \(\hat y\) is the unique active branch, the local derivative has the sign of \(y_0\). When a strategic branch \(\bar y_i\) is uniquely active, the local derivative has the opposite sign. At branch crossings, \(T_{\mathsf{SS}}\) may be kinked. Therefore, for \(y_0\neq0\), \(T_{\mathsf{SS}}\) has no unconditional monotone comparative static in \(\lambda\).

It remains to prove the policy-size statements. Recall \(L:=\max\{S_{A,0},S_{B,0}\}\), \(\theta:=L/s\) and let 
\[
     M:=\min\{S_{A,0},S_{B,0}\}.
\]
Since the common factor \((E+s)r^2/q\) is positive,
\[
\begin{aligned}
        \bar y_A-\bar y_B
        &=
        \frac{(E+s)r^2}{q}
        \left[
        \frac{1}{S_{B,0}}
        \left(1-\frac{S_{A,0}}{s z}\right)
        -
        \frac{1}{S_{A,0}}
        \left(1-\frac{S_{B,0}}{s z}\right)
        \right] \\
        &=
        \frac{\psi(S_{A,0}-S_{B,0})(E+s)r^2}
             {S_{A,0}S_{B,0}zq} .
\end{aligned}
\]
Hence the larger initial policy determines \(\bar y\) when \(\psi>0\), and the two branches coincide when \(\psi=0\). Therefore
\[
        \bar y+1
        =
        \frac{(E+s)r^2}{Mq}
        \left(1-\frac{L}{s z}\right).
\]
With \(L=\theta s\) and \(M=(1-\theta)s\), this becomes
\[
        \bar y+1
        =
        \frac{(E+s)r^2}{s q}
        \frac{1-\theta/z}{1-\theta} .
\]
Holding \(s\) fixed,
\[
        \frac{\partial \bar y}{\partial \theta}
        =
        \frac{(E+s)r^2}{s q}
        \frac{\psi}{z(1-\theta)^2}
        \geq0,
\]
with strict inequality if \(\psi>0\). At \(\theta=1/2\), this is the right derivative on the admissible domain \(\theta\in[1/2,1)\). Holding \(\theta\) and \(E\) fixed,
\[
        \frac{\partial \bar y}{\partial s}
        =
        -
        \frac{Er^2}{q s^2}
        \frac{1-\theta/z}{1-\theta}
        \leq0,
\]
with strict inequality if \(E>0\).
\end{proof}

\printbibliography

\end{document}